\renewenvironment{proof}[1][Proof]{\noindent\textit{#1. } }{\hfill$\square$}
 \newtheoremstyle{theorem}{6pt}{6pt}{\rm}{}{\sffamily}{ }{ }{}
 \theoremstyle{theorem}
\newtheorem{theorem}{\sc Theorem}[section]
  \newtheoremstyle{thm}{6pt}{6pt}{\rm}{}{\sffamily}{ }{ }{}
 \theoremstyle{thm}
\newtheorem{thm}{\sc Theorem}[section]
 \newtheoremstyle{lemma}{6pt}{6pt}{\rm}{}{\sffamily}{ }{ }{}
 \theoremstyle{lemma}
 \newtheoremstyle{lem}{6pt}{6pt}{\rm}{}{\sffamily}{ }{ }{}
 \theoremstyle{lem}
\newtheorem{lem}{\sc Lemma}[section]
\newtheoremstyle{case}{6pt}{6pt}{\rm}{}{}{. }{ }{}
 \theoremstyle{case}
 \newtheoremstyle{statement}{6pt}{6pt}{\rm}{}{\sffamily}{ }{ }{}
\theoremstyle{statement}
 \newtheoremstyle{corollary}{6pt}{6pt}{\rm}{}{\sffamily}{ }{ }{}
 \theoremstyle{corollary}
  \newtheoremstyle{defi}{6pt}{6pt}{\rm}{}{\sffamily}{ }{ }{}
 \theoremstyle{defi}
  \newtheoremstyle{cor}{6pt}{6pt}{\rm}{}{\sffamily}{ }{ }{}
 \theoremstyle{cor}
\newtheoremstyle{example}{6pt}{6pt}{\rm}{}{\sffamily}{ }{ }{}
\theoremstyle{example}
\newtheoremstyle{remark}{6pt}{6pt}{\rm}{}{\sffamily}{ }{ }{}
\theoremstyle{remark}
\newtheorem{remark}{\sc Remark}[section]
\newtheoremstyle{approximation}{6pt}{6pt}{\rm}{}{\sffamily}{ }{ }{}
\theoremstyle{approximation}
\newtheoremstyle{scheme}{6pt}{6pt}{\rm}{}{\sffamily}{ }{ }{}
\theoremstyle{scheme}
\newtheoremstyle{Algorithm}{6pt}{6pt}{\rm}{}{\sffamily}{ }{ }{}
\theoremstyle{Algorithm}
 \newtheoremstyle{Remark}{6pt}{6pt}{\rm}{}{\sffamily}{ }{ }{}
 \theoremstyle{Remark}
\newtheoremstyle{Lemma}{6pt}{6pt}{\rm}{}{\sffamily}{ }{ }{}
\theoremstyle{Lemma}
\newtheoremstyle{Assumption}{6pt}{6pt}{\rm}{}{\sffamily}{ }{ }{}
\theoremstyle{Assumption}
\newtheoremstyle{Proposition}{6pt}{6pt}{\rm}{}{\sffamily}{ }{ }{}
\theoremstyle{Proposition}
\newtheoremstyle{prop}{6pt}{6pt}{\rm}{}{\sffamily}{ }{ }{}
\theoremstyle{prop}
\newtheoremstyle{rem}{6pt}{6pt}{\rm}{}{\sffamily}{ }{ }{}
 \theoremstyle{rem}
 \newtheorem{rem}{\sc Remark}[section]
\newtheoremstyle{hypo}{6pt}{6pt}{\rm}{}{\sffamily}{ }{ }{}
 \theoremstyle{hypo}
  \newtheoremstyle{Step}{6pt}{6pt}{\rm}{}{}{ }{ }{}
 \theoremstyle{Step}
 \newtheoremstyle{lema}{6pt}{6pt}{\rm}{}{\sffamily}{ }{ }{}
 \theoremstyle{lema}
\newcommand{\reff}[1]{(\ref{#1})}
\newcommand{\ol}[1]{\overline{#1}}
\newcommand{\Av}{\tilde{\rm A}_v}
\newcommand{\Aw}{\tilde{\rm A}_w}
\newcommand{\PB}[1]{{\rm P}_N{\rm B}\brkts{#1}}
\newcommand{\QB}[1]{{\rm Q}_N{\rm B}\brkts{#1}}
\newcommand{\g}{\mathrm{g}}
\newcommand{\cg}[1]{\mathcal{#1}}
\newcommand{\av}[1]{\left|#1\right|}
\newcommand{\Ll}[1]{\left\|#1\right\|}
\newcommand{\N}[2]{\left\|#1\right\|_{#2}}
\newcommand{\brkts}[1]{\left(#1\right)}
\newcommand{\ebrkts}[1]{\left[#1\right]}
\newcommand{\brcs}[1]{\left\{#1\right\}}
\newcommand{\bsplitl}[2]{
\begin{equation}
\begin{split}
#1
\end{split}
\label{#2}
\end{equation}}
\newcommand{\bsplit}[1]{
\begin{equation*}
\begin{split}
#1
\end{split}
\end{equation*}}
\newcommand{\new}[1]{#1} %{\textcolor[rgb]{0.00,0.00,0.98}{#1}}
\numberwithin{equation}{section}
\begin{document}

\title{A new mode reduction strategy for the generalized Kuramoto-Sivashinsky equation}
\author{{\sc M. Schmuck$^{1,2}$, M. Pradas$^1$, G. A. Pavliotis$^2$, and S.~Kalliadasis$^1$}\\[2pt]
$^1$ Department of Chemical Engineering, Imperial College, London SW7 2AZ, UK\\[2pt]
$^2$ Department of Mathematics, Imperial College, London SW7 2AZ, UK\\[2pt]
%\emph{Corresponding author: \href{mailto:m.schmuck@imperial.ac.uk}{m.schmuck@imperial.ac.uk}}\\[2pt]
%{\rm [Received on \today]}
}
%\pagestlye{headings}
\markboth{M.~Schmuck et al.}{A new mode reduction method for non-Hamiltonian PDEs}
\maketitle

\begin{abstract}
{Consider the generalized Kuramoto-Sivashinsky (gKS) equation. It is
a model prototype for a wide variety of physical systems, from
flame-front propagation, and more general front propagation in
reaction-diffusion systems, to interface motion of viscous film
flows. Our aim is to develop a systematic and rigorous
low-dimensional representation of the gKS equation. For this
purpose, we approximate it by a renormalization group (RG) equation which
is qualitatively characterized by rigorous error bounds. This
formulation allows for a new stochastic mode reduction guaranteeing
optimality in the sense of maximal information entropy. Herewith,
noise is systematically added to the reduced gKS equation and gives
a rigorous and analytical explanation for its origin.

These new results would allow to reliably perform low-dimensional numerical
computations by accounting for the neglected degrees of freedom in a
systematic way. Moreover, the presented reduction strategy might also be
useful in other applications where classical mode reduction approaches fail
\new{or are too complicated to be implemented}.} {Generalized Kuramoto-Sivashinsky equation,
renormalization group method, stochastic mode reduction}
\end{abstract}

%{\bf Subject classifications.} 35Q53, 37F25, 81T15, 81T17
%%%%%%%%%%%%%%%%%%%%%%%%%%%%%%%%%%%%%%%%%%%%%%%%%%%%%%%%%%%%%%%%%%%%%%%%%%%%%%%%%%%%%%

\section{Introduction}
\label{sec:Intr}

We consider abstract evolution equations of the form,
\bsplitl{
\frac{\partial u}{\partial t}
    + {\rm B}(u,u)
    +{\rm A}u
    = 0\,,&
\\u(x,0)
    = u_0(x)\,,
}{absKS} where ${\rm A}$ denotes a general linear operator and ${\rm
B}$ represents a nonlinear term of Burgers' type, i.e., $\new{{\rm B}(u)={\rm B}(u,u)=}uu_x$. Well known
equations in this class include, e.g. the viscous Burgers equation,
the Korteweg-de Vries equation, and the Benney-Lin equation. We
start by performing a formal renormalization group (RG) approach for
the general form in~\reff{absKS}. We subsequently focus on a
rigorous low-dimensional reduction of the generalized
Kuramoto-Sivashinsky (gKS) equation: \bsplitl{
\partial_t u
    +\lambda uu_x
    +\kappa u_{xx}
    +\delta u_{xxx}
    +\nu u_{xxxx}
    = 0\,,
    & \qquad\textrm{in }{\cal P}_\alpha\times ]0,T[\,,
\\
u(x,0)
    = g(x)\,,
    &\qquad\textrm{in }{\cal P}_\alpha\,,
}{KS} where ${\cal P}_\alpha:=]-\alpha\pi,\alpha\pi[$ is a periodic
domain with $\alpha:=\frac{L}{2\pi}$ for an arbitrary period $L>0$
while the solution $u(x,t)\, :\, {\cal P}_\alpha\times
]0,T[\to\mathbb{R}$ of \reff{KS} represents for example
\new{ the fluctuations around a fixed mean height of
a one-dimensional surface above a substrate point $x$ at time $t$ as e.g.~in
a thin film flowing down a vertical wall,
e.g.~\cite{Kalliadasis2011,Marc_JFM,Marc_PoF}.} We also take $g(x)\in
H^q({\cg P}_\alpha)$ for $q\geq 4$, a periodic initial condition, i.e.,
$$g(x+L)=g(x)\,.$$

The gKS equation is of the type~\reff{absKS} with, \bsplitl{ {\rm
B}(u) :=
    {\rm B}(u,u)
    := \lambda uu_x\,,
\qquad\textrm{and}\qquad {\rm A}
    :=\brkts{
    \kappa \partial^2_x
    +\delta\partial^3_x
    +\nu\partial^4_x
    }\,.
}{absOp} A rigorous dimensional reduction of the gKS equation is of
special interest because it is not a Hamiltonian system and does not
have an intrinsic invariant measure. This makes direct application
of stochastic mode reduction strategies difficult, see \cite{Stinis}
for instance.

It is noteworthy that the gKS equation retains the fundamental elements of
any nonlinear process that involves wave evolution: the simplest possible
nonlinearity $u u_x$, instability and energy production $u_{xx}$, stability
and energy dissipation $u_{xxxx}$ and dispersion $u_{xxx}$. We notice that
the nonlinearity arises effectively from the nonlinear correction to the
phase speed, a nonlinear kinematic effect that captures how larger waves move
faster than smaller ones. In the context of thin-film
flows~\cite{Tseluiko2009,Tseluiko2010,Tseluiko2010b,Tseluiko2012}, the terms
$u u_x$, $u_{xx}$, $u_{xxx}$ and $u_{xxxx}$ are due to the interfacial
kinematics associated with mean flow, inertia, viscosity and surface tension,
respectively, with the corresponding parameters, $\lambda$, $\kappa$,
$\delta$ and $\nu$ all positive and measuring the relative importance of
these effects. The ``strength of the nonlinearity", $\lambda$, in particular,
is associated with the scaling for the velocity (and hence time). In
addition, $\int_{{\cal P}_\alpha} u\,dx=u_0$, a measure of the volume of the
liquid, a conservation property for systems whose spatial average does not
drift. A simplified form of \reff{KS} is obtained by appropriately rescaling
$u$, $x$ and $t$ which is equivalent to setting $\lambda=\kappa=\nu=1$ and
keeping the same notation for dimensionless
quantities~\cite{Tseluiko2009,Tseluiko2010,Tseluiko2010b,Tseluiko2012}.

\medskip

As with many nonlinear time-dependent problems in science and
engineering, equations of the form \reff{absKS} are too complex to
be fully resolved and the influence of neglected degrees of freedom
is not clear a priori. This problem exists independently of spatial
dimensions for \reff{absKS} and hence for the gKS equation also.
The reliable resolution of high dimensional problems is a well-known
issue in computational science where one can numerically only deal with a finite number of
degrees of freedom.

Hence, there is a strong need for \new{(finite dimensional/)} dimensionally reduced formulations, which
in turn would allow for studies of long time behavior of physical systems.
Modeling of the ocean-atmosphere, which mainly generates our weather, is one
important example: One has a characteristic timescale of several years for
the ocean in contrast to a couple of days governing atmospheric structures
such as cyclones. As a consequence, a characteristic feature of many physical
systems is the presence of fast and slow degrees of freedom. The relevant
information of a system's long time behavior is often primarily contained in
the slow modes. For Hamiltonian systems, such a mode reduced
mathematical formulation is generally obtained by the Mori-Zwanzig or optimal
prediction techniques as described later on. Here we focus on nonlinear
equations not showing a Hamiltonian-like structure as exemplified by equation
\reff{KS} and we provide a systematic (e.g. via our RG method and maximum
entropy principle) and rigorous (e.g. via error estimates) framework for the
reliable derivation of low-dimensional (/slow-mode) representations of such
equations.

First, we recall the general, often ad hoc, approximation
of decomposing the problem of interest into fast $w$ and slow $v$
modes. For equation \reff{absKS} such a purely formal splitting,
i.e., $u \approx u^\epsilon = v^\epsilon + w^\epsilon$, reads in
standard notation applied in the literature as, \bsplitl{
\frac{\partial}{\partial t}v^\epsilon
    & = f(v^\epsilon,w^\epsilon)\,,
\\
\frac{\partial}{\partial t}w^\epsilon
    & = \frac{1}{\epsilon} g(v^\epsilon,w^\epsilon)\,,
}{SloFas} where the small parameter $0<\epsilon \ll 1$ mediates the timescale
separation. Mode reduction strategies, such as ``adiabatic elimination''
\cite{VanKampen1985}, invariant manifolds \cite{Foias1988}, and optimal
prediction \cite{Chorin1998}, are tools to eliminate the fast modes and
derive ``appropriate'' equations for the slow modes only. We remark that
especially for systems with spatio-temporal chaos (like the gKS equation
\cite{CHOW1995,Tseluiko2010}) such a reduction needs to be carefully
performed in order to not lose the relevant dynamical characteristics of the
full system~\cite{Schmuck2013}. \new{Also the study in~\cite{Marion1989} emphasizes the
importance of careful finite dimensional approximations with computational
schemes by exploiting the structure of Galerkin methods.} The strategy of
defining an invariant manifold is almost classical by now. For example, in
\cite{Foias1988a} the existence of an inertial manifold for the KS equation
(obtained from (\ref{KS}) with $\delta = 0$) is shown. An inertial manifold
is a finite-dimensional, exponentially attracting, positively invariant
Lipschitz manifold. The principle idea is to determine a map $\Phi:
\mathbb{V} \to \mathbb{W}$ such that we can rewrite equation \reff{absKS} in
the low-dimensional form, \bsplitl{
\partial_t v + {\rm PB}(v+\Phi(v),v+\Phi(v))
    + {\rm AP}v
    =0\,,
}{InvMan} where ${\rm P}\,:\,\mathbb{H}\to \mathbb{V}$ and ${\rm
Q}:= ({\rm I}-{\rm P})\,:\mathbb{H}\to\mathbb{W}$ are projections
onto the orthogonal subspaces $\mathbb{V}$ and $\mathbb{W}$ such
that $\mathbb{H}=\mathbb{V}\oplus\mathbb{W}$. A strategy to
determine $\Phi$ in general Galerkin spaces is, for example,
suggested in \cite{Foias1988} for the KS equation. The RG approach
performed here can also be understood as a formal and feasible
procedure to derive an asymptotic invariant manifold, see
\reff{3.14} and \reff{2.29}.
\new{Further analytical results are the characterization of a global attracting
set for the Kuramoto-Sivashinsky equation by the so-called background flow method \cite{Collet1993,Goodman1994} and via a ``capillary Burgers equation'' \cite{Otto2009}, where the
latter also forms the best known bound in this context. In \cite{Akrivis2013,Collet1993a},
the analyticity of solutions is studied.}

\medskip

{\bf Open questions and answers to the classical separation
\reff{SloFas}:} \emph{ (i) Is the splitting \reff{SloFas} and the
approximation of $u$ by $u^\epsilon$ valid and in which sense?} This
question is often not answered in the literature where from the
outset a separation \reff{SloFas} is assumed, see
\cite{Boghosian1999,CHOW1995} for instance. These studies
heuristically motivate a timescale mediation (or separation into
slow and fast scales) of the form \reff{SloFas}. The present work
aims to provide a rigorous foundation in Theorem \ref{thm:ErEs} by
the following estimate, \bsplitl{ \N{u-v^\epsilon}{L^2({\cg
P}_\alpha)}^2(T)
    \leq C\epsilon^2
            +{\rm exp}\brkts{C T}\brkts{
                \epsilon^{1/4}
                +\epsilon % \frac{\epsilon}{1+1/\epsilon}
            }
        \,.
}{MaErEs}
If we suppose that $u$ and $u^\epsilon$ satisfy a Gevrey regularity characterized by a
parameter $\sigma>0$, then we can improve \reff{MaErEs} in the following way, %mErEs
\bsplitl{
\N{u-v^\epsilon}{L^2({\cg P}_\alpha)}^2(T)
    & \leq
    C\epsilon^2
            +{\rm exp}\brkts{C T}\brkts{
                \epsilon^{1/4}{\rm exp}\brkts{-\frac{\sigma}{\epsilon^{1/4}}}
                + \epsilon %\frac{\epsilon}{1+1/\epsilon}
            }
    \biggr)\,.
%   C\brkts{\epsilon^2+{\rm exp}(Ct)\brkts{\epsilon^3+\epsilon^{1/4}}}\,,
}{MaGsEst} It should be pointed out that these estimates also
account for the reduction to the slow degrees of freedom
$v^\epsilon$ too and not only for the error between $u$ and
$u^\epsilon$.

\emph{ (ii) How can we account for the fast degrees of freedom
$w^\epsilon$ in an equation for the slow modes $v^\epsilon$ only?}
For this purpose we apply an abstract RG approach extended to
general multiscale problems, see
\cite{Chen1996,Temam1999,Moise2001}. The RG method was first
introduced in quantum field theory as a tool to perform scale
transformations. The method then became popular with Wilson's work
on the Kondo problem \cite{Wilson1975}. It can formally provide the
separation \reff{SloFas}. This means, we first obtain an
approximation for $v^\epsilon$ of the form \bsplitl{
\partial_t{v}^\epsilon
    +{\rm A}_v{v}^\epsilon
    +{\rm P}_N{\rm B}({v}^\epsilon,{v}^\epsilon)
    = -\epsilon {\rm G}_\epsilon(U(t),v^\epsilon)\,,
}{3.14} where $\epsilon G_\epsilon(U,v^\epsilon)$ is a perturbation
``force" originating from the renormalization method and $U=
V+W$ is a solution of the RG equations \bsplitl{
\partial_t V
    +{\rm A}_vV
    +{\rm P}_N{\rm B}(V,V)
    & = 0\,,
\\
\partial_t W
    + {\rm Q}_N{\rm B}_1(V,W)
%   +{\rm QB}_2(W,W)
    & = 0\,,
}{2.29} where $V={\rm P}_N U$ and $W=({\rm I}-{\rm P}_N)U=:{\rm
Q}_NU$ are projections onto the normalized slow and fast manifolds,
respectively. Since we can analytically solve for $W$, we end up
with an equation for the slow variable $v^\epsilon$ only. The above
estimates \reff{MaErEs} and \reff{MaGsEst} then make the reduction
\reff{3.14} rigorous. Moreover, equation \reff{2.29}$_2$ can be
interpreted as the map $\Phi^\epsilon(v^\epsilon)$ onto the
asymptotic invariant manifold.

It should also be pointed out that at this stage the RG
approximation \reff{3.14} alone is not satisfactory since the fast
variable $W$ contained in $U$ is of infinite dimension and hence can
not entirely be resolved numerically. We give an answer to this
problem after the last question (iii) by the principle of maximum entropy. Moreover, question (ii) is of particular
relevance here, since the fast modes prevent the existence of a
canonical invariant measure. Such a measure \new{makes} classical reduction methods
such as  Mori-Zwanzig and optimal prediction \new{more feasible and does not require the
choice of a less physically founded non-invariant measure}, see also question (iii) where
a different methodology is proposed.

\emph{(iii) What kind of information do we need to carry over from
the (infinite dimensional) fast degrees of freedom to the (finite
dimensional) slow ones and how?} To this end, we derive a stochastic
evolution equation for the resolved (slow) variable by properly
including necessary information from the unresolved (fast) variable
by a maximum information entropy principle introduced in
\cite{Jaynes1957a,Jaynes1957,Rosenkrantz1989}. This principle does
not require statistical data to define all Fourier modes. It turns
out that the asymptotic behavior in time of a weighted variance of
the fast modes is sufficient. The necessity of such a strong
assumption relies on the fact the gKS equation does not have an
infinite-dimensional invariant measure and that we only account for
spatial randomness. Via this entropy principle (Theorem
\ref{thm:PrDe}) we then conclude that the Fourier modes of the fast
variable $W$ in \reff{2.29}$_2$ are Gaussian distributed with zero
mean. Hence, we rigorously obtain a noisy gKS equation by applying
the random variable $U=v^\epsilon+W$ in the deterministic equation
for the slow variable \reff{3.14}. Herewith, our analysis explains
how to rigorously add a random force to the gKS equation.
Furthermore, our derivation further shows that the induced noise
accounts for the unresolved degrees of freedom and hence becomes
less important for an increasing number of grid points in
computations.

The approach proposed here provides an alternative to the
Mori-Zwanzig formalism \cite{Mori1965,Zwanzig1961,Zwanzig1973} which advantageously
makes use of a Hamiltonian \cite{Mori1965,Zwanzig1961} or extended Hamiltonian structure \cite{Zwanzig1973}. Mori-Zwanzig techniques
and related optimal
prediction methods \cite{Chorin1998} %Chorin2000,Chorin2002,
generally rely on a canonical probability distribution (invariant measure)
which exists naturally for Hamiltonian systems. \new{In principle, one can
also apply these techniques to systems that lack an invariant measure.
However, the methodology becomes much more involved in such situations and it
is not clear how to choose the required non-invariant measure unlike with
systems with a canonical invariant measure.} The canonical probability
density for a Hamiltonian $H(u)$ is $\rho(u):=Z^{-1}{\rm exp}\brkts{-\beta
H(u)}$, where $\beta$ is the inverse temperature and $Z$ a normalization
constant referred to as the partition function. The Mori-Zwanzig formalism
then is based on a projection operator ${\rm P}$ that projects functions in
$L^2$ onto a subspace that only depends on the resolved degrees of freedom.
With respect to the canonical density $\rho$ such a projection operator ${\rm
P}$ can be defined by the conditional expectation \bsplitl{ \ebrkts{{\rm
P}f}(v)
    :=\mathbb{E}\ebrkts{f\,\bigl|\,v} =\frac{\int f(v,w)\rho(v,w)\,dw}{\int\rho(v,w)\,dw}\,,
}{PrOp} where $f\in L^2$ and $v$ is the resolved and $w$ the
unresolved variable. The projection ${\rm P}$ and Dyson's formula
for evolution operators then provide an equation for
the resolved modes $v$ only. Moreover, \reff{PrOp} is the
conditional expectation of $f$ given $v$ and hence is the best least
square approximation of $f$ by a function of $v$. Therefore, the
projection ${\rm P}$ guarantees optimality which is the key idea in
the optimal prediction method. However, neither a Hamiltonian
structure nor an invariant measure exists for the gKS equation. Therefore, it is not obvious how to derive
standard optimality statements relying on a conditional probability
argument \cite{Stinis}. In contrast to such a conditional
probability approach, we achieve optimality in the sense of maximum
information entropy. However, we remark that one can also define
other projections than \reff{PrOp}.

\medskip

The purpose of the present article is threefold: 1. To reliably perform a (stochastic) mode reduction for
the full gKS equation in contrast to \cite{Stinis} where a truncated
problem is studied. The principal idea is based on an abstract RG
approach, as emphasized earlier. We derive error estimates (Theorem
\ref{thm:ErEs}) for this reduction and hence provide rigorous
support for the heuristic motivation of a noisy, low dimensional
approximation
deducted in \cite{CHOW1995} by the standard RG method in physics \cite{Wilson1975};\\
2. To rigorously support Stinis' assumption of Gaussian distributed
Fourier modes \cite{Stinis}. To this end, we derive a probability
distribution (Theorem \ref{thm:PrDe}) for the fast modes by the
principle of maximum
information entropy.\\
3. The findings in 1 and 2 form the bases for a new stochastic mode
reduction strategy. We are able to reduce the fast variable by an
equation for the slow variable only. The information of the fast
modes enters as a random variable $W$ via a force term into the slow
mode equations. We are not aware of any previous work that utilizes
the RG method in the context of stochastic mode reduction.

\medskip

We introduce basic notation and well-known
results in Section \ref{sec:Not}. A formal derivation of an RG
equation for the gKS equation follows in Section \ref{sec:RGM}. In
Section \ref{sec:Cmp} we obtain error estimates to rigorously verify
the approximation derived in Section \ref{sec:RGM}. In Section
\ref{sec:StMoRe} we reduce the fast modes by a mode reduction
strategy based on the maximum information entropy principle.
Finally, in Section~\ref{sec:Cocl} we close with conclusions and
perspectives.
%%%%%%%%%%%%%%%%%%%%%%%%%%%%%%%%%%%%%%%%%%%%%%%%%%%%%%%%%%%%%%%%%%%%%%
%\input{intro8.tex}
\subsection{The gKS equation}\label{sec:KS}
The KS equation is a paradigmatic model for the study of
low-dimensional spatio-temporal chaos or weak/dissipative turbulence
as defined by Manneville~\cite{Manneville}. This type of turbulence
is often characterized by formation of clearly identifiable
localized coherent structures in what appears to be a randomly
disturbed system, as is e.g. the case with Rayleigh-B\'enard
convection~\cite{Shraiman1986}. The KS equation was first proposed
as a model for pattern formation in reaction-diffusion systems by
Kuramoto \cite{Kuramoto1976}. Its derivation is based on a
generalized time-dependent Ginzburg-Landau equation. Sivashinsky
\cite{Sivashinsky1979} derived the KS equation as an asymptotic
approximation of a diffusional-thermal flame model. The equation
also describes small-amplitude waves on the surface of a thin film
flowing down a planar inclined wall
(e.g.~\cite{Homsy74,Kalliadasis2011}).

With the addition of the dispersive term, $u_{xxx}$, the KS equation becomes
the gKS equation. Like the KS equation, it has been reported for a wide
variety of systems, from plasma waves with dispersion due to finite ion
banana width~\cite{Cohen76} to a thin film flowing down a planar wall for
near-critical conditions (e.g. ~\cite{Saprykin2005,Kalliadasis2011}). The
studies in~\cite{Tseluiko2009,Tseluiko2010,Tseluiko2010b,Tseluiko2012} have
developed a coherent-structure theory for the interaction of the
solitary-pulse solutions of the gKS equation.
In~\cite{Tseluiko2009,Tseluiko2010} the theory was shown to be in agreement
with experiments using a thin film coating a vertical fiber, another
hydrodynamic system where the gKS equation can be applicable.

The well-posedness of \reff{absKS} is established for example in
\cite{Tadmor1986} in the class of generalized Burgers equations
which consist of a quadratic nonlinearity and arbitrary linear
parabolic part. The article \cite{LARKIN2004} verifies solvability of the gKS equation
in bounded domains and studies its limit towards the Korteweg-de Vries equation. In the
context of long-time and large-space considerations, there are
recent analytical attempts to verify an ``equipartition principle"
in the power spectrum of periodic solutions by deriving bounds on
their space average of $\av{u}$ and certain derivatives of it, see
\cite{Giacomelli2005,Otto2009}. Such a spectral characterization is
reminiscent of white noise.

An interesting work that applies the optimal prediction to the KS
equation is that of Stinis~\cite{Stinis}. Since this approach
requires a non-invariant measure, the author constructs a Gibbs measure for the required initial distribution
through inference from empirical data (obtained by a computational
approach). This allows then to define the conditional expectation
providing optimality by an orthogonal projection of the unresolved
modes to the resolved ones. However, this approach already assumes a
Gaussian distribution from the outset. For this strategy, one also
needs to work with the truncated KS equation. Sufficient numerical
data are then required in advance for a reliable construction of an
initial distribution.

%\input{ksequation7.tex}
%\subsection{Review of reduction techniques}\label{sec:ReMe}
%\input{redmethods.tex}
\subsection{Notation}
\label{sec:Not}
Functions $u\in H^s({\cg P}_\alpha)$ for $s\geq 1$ can be represented by their Fourier
series,
\bsplitl{
u(x)
	= \sum_{k\in\mathbb{Z}}u_k{\rm exp}\brkts{i\frac{k}{\alpha}x}\,,
	\quad
	\ol{u_k}
	= u_{-k}\,,
}{FSE}
where $H^s$ denotes here the usual periodic Sobolev space with finite norm,
\bsplitl{
\N{f}{H^s}^2
	:= \sum_{k\in\mathbb{Z}}(1+\av{k}^2)^s\av{\hat{f}(k)}^2\,.
}{HsNo}
Furthermore, the square root of the
latter quantity is a norm on $H^s({\cal P}_\alpha)$ equivalent to the usual one. We denote for $s\geq 0$,
\bsplitl{
\dot{H}^s({\cal P}_\alpha)
	:= \brcs{
		u\in H^s({\cal P}_\alpha)\,\biggr|\,\int_{{\cal P}_\alpha} u\,dx=0
	}\,.
}{AvH1}
The subspace of $\dot{H}^s({\cg P}_\alpha)$ spanned by the set,
\bsplitl{
\brcs{
	{\rm e}^{i\frac{k}{\alpha}x}\,\bigr|\, k\in\mathbb{Z},\,, -N\leq k\leq N
	}
}{Span}
is denoted by $H_N^s$.
 For a given integer $N$ we define the
projections $v:={\rm P}_Nu$ and $w:={\rm Q}_Nu:=({\rm I}-{\rm P}_N)u$ by,
\bsplitl{
v = {\rm P}_Nu
	= \sum_{\av{k}\leq N}u_k{\rm exp}\brkts{i\frac{k}{\alpha}x}\,,&
\\
w = {\rm Q}_Nu
	= \sum_{\av{k}>N}u_k{\rm exp}\brkts{i\frac{k}{\alpha}x}\,.&
}{PQ} Let us mention that the gKS equation preserves mass as already
noted in Section~\ref{sec:Intr}, i.e., \bsplitl{ \int_{{\cg
P}_\alpha} u\,dx = u_0\,, }{MaCoL} where $u_0$ is the zero-th
Fourier mode. We remark that ${\rm P}_N$ is an orthogonal projection
with respect to $H_N^s$, that means, \bsplitl{ \int_{{\cg
P}_\alpha}({\rm P}_N u - u)\phi\, dx
	= 0
	\qquad\textrm{for all }\phi\in H^s_N\,.
}{Orthog}
The projection ${\rm P}_N$ enjoys the following well-known property \cite{Jackson1930,Maday1988}, i.e., for $k\geq s$, $k\geq 0$ it holds,
\bsplitl{
\N{u-{\rm P}_Nu}{H^s}
	\leq CN^{s-k}\N{u}{H^k}
	\qquad\textrm{for all }u\in H^k_N({\cg P}_\alpha)\,.
}{IntEst}

Next, we introduce Gevrey spaces. For $\sigma\geq 0$
and $s\geq 0$ we say that a function $f$ is in the Gevrey space
$G_{\sigma,s}$ if and only if
\bsplitl{
\N{f}{G_{\sigma,s}}^2
	:=\sum_{k\in\mathbb{Z}}\brkts{
		1+\av{k}^2
	}^s{\rm exp}\brkts{
		2\sigma\sqrt{1+\av{k}^2}
	}\av{f_k}^2
	<\infty\,,
}{GN}
where $f_k$ denote the Fourier coefficients of $f$. Note that if $\sigma=0$, then
$H^s=G_{0,s}$. Moreover, it can be readily be proved, see \cite{Kalisch2007}, that for
$u\in G_{\sigma,s}$ the following inequality holds,
\bsplitl{
\N{u-{\rm P}_Nu}{H^s}
	\leq N^{s-k}{\rm exp}\brkts{
		-\sigma N
	}\N{u}{G_{\sigma,k}}\,.
}{Gint}

%\input{notation6.tex}
%%%%%%%%%%%%%%%%%%%%%%%%%%%%%%%%%%%%%%%%%%%%%%%%%%%%%%%%%%%%%%%%%%%%%%%%%%%%%%%%%%%%%%%%%%%%%%%%
\section{Formal derivation of a reduced gKS equation}
\label{sec:RGM}

As noted in Section~\ref{sec:Intr}, we adapt RG
approaches~\cite{Temam1999,Moise2001} to the gKS equation.

\subsection{Projections into fast and slow equations}\label{sec:prEqs}
We apply the projections ${\rm P}_N$ and ${\rm Q}_N$ defined in \reff{PQ} to equation \reff{absKS} and obtain the following coupled system for $v$ and $w$,
\bsplitl{
\partial_tv
	+{\rm P}_N{\rm B}(v+w)
	+{\rm A}_vv
	=0\,,
	&\qquad\textrm{where }{\rm A}_v={\rm P}_N{\rm A}={\rm AP}_N\,,
\\
\partial_tw
	+{\rm Q}_N{\rm B}(v+w)
	+{\rm A}_ww
	=0\,,
	&\qquad\textrm{where }{\rm A}_w={\rm Q}_N{\rm A}={\rm A}{\rm Q}_N\,.
}{SlFa}
We define $\epsilon=\frac{1}{N^4}$, where $N$ is large enough (see error estimates, i.e. Theorem \ref{thm:ErEs},) and set,
\bsplitl{
\Av
	= {\rm A}_v
	&\qquad\textrm{on }{\rm P}_N\dot{H}^s=H^s_N\,,
\\
\Aw
	= \epsilon {\rm A}_w
	= \frac{{\rm A}_w}{N^4}
	&\qquad\textrm{on }{\rm Q}_N\dot{H}^s=\dot{H}^s\setminus H^s_N\,.
}{alpha}
The eigenvectors of $\Av$ are the functions ${\rm exp}\brkts{i\frac{k}{\alpha}x}$, $k\in\mathbb{Z}$,
$\av{k}\leq N$ with eigenvalues,
\bsplit{
\rho^v_k
	:= -\nu\av{\frac{k}{\alpha}}^2
	-i\delta\brkts{\frac{k}{\alpha}}^3
	+\kappa\av{\frac{k}{\alpha}}^4\,.
}
Correspondingly, the eigenvectors of $\Aw$ are the functions ${\rm exp}\brkts{i\frac{k}{\alpha}x}$, $k\in \mathbb{Z}$, $\av{k}>N$
with eigenvalues,
\bsplitl{
\rho^w_k
	:= \frac{1}{N^4}\brkts{
	-\nu\av{\frac{k}{\alpha}}^2
	-i\delta\brkts{\frac{k}{\alpha}}^3
	+\kappa\av{\frac{k}{\alpha}}^4
	}\,.
}{EV}

\begin{rem}\label{rem:1} The RG method is formally applied here as if these
operators $\tilde{A}_v$ and $\tilde{A}_w$ were independent of $\epsilon$. This technical step of
scaling the linear operator and its subsequent treatment is part of
the abstract RG approach introduced in \cite{Temam1999,Moise2001} in
the context of fluid dynamics.
\end{rem}

We can now rewrite \reff{SlFa} by,
\bsplitl{
\partial_t v
	+ \Av v
	+ \PB{v+w}
	=0\,,
&\\
\partial_t w
	+\frac{1}{\epsilon}\Aw w
	+\QB{v+w}
	= 0\,.
}{epsSlFa}
For convenience we additionally define,
\bsplitl{
u = \brkts{\begin{matrix} v \\ w \end{matrix}}\,, \quad {\rm L} = \brkts{\begin{matrix} 0 \\ \Aw \end{matrix}}\,,
\quad {\cal A} = \brkts{\begin{matrix} \Av \\ 0 \end{matrix}}\,,
\quad {\rm F}(u) = \brkts{\begin{matrix} -\PB{v+w} \\ -\QB{v+w} \end{matrix}}\,,
}{CvNt}
and hence rewrite \reff{epsSlFa} in the following compact way,
\bsplitl{
\partial_t u
	+\frac{1}{\epsilon}{\rm L}u
	+{\cal A}u
	= {\rm F}(u)\,.
}{epsKS}

For the subsequent RG analysis we introduce the fast time scale
$s=\frac{t}{\epsilon}$, and we define $\tilde{u}(s)=u(\epsilon s)$.
We set $\tilde{v}(s)={\rm P}\tilde{u}(s)$, $\tilde{w}(s)={\rm
Q}\tilde{u}(s)$. In this variables \reff{epsSlFa} becomes, \bsplitl{
\partial_s\tilde{v}
	+ \epsilon\Av\tilde{v}
	+\epsilon\PB{\tilde{v}+\tilde{w}}
	=0\,,
&\\
\partial_s\tilde{w}
	+ \Aw\tilde{w}
	+\epsilon\QB{\tilde{v}+\tilde{w}}
	=0\,,
&	
}{FaKS}
or \reff{epsKS},
\bsplitl{
\partial_s\tilde{u}
	+{\rm L}\tilde{u}
	+\epsilon{\cal A}\tilde{u}
	= \epsilon {\rm F}(\tilde{u})\,.
}{FaepsKS}

\subsection{Perturbation expansion: The RG equation}\label{sec:PeEx} We now formally apply
the RG method and additionally
omit the dependence of ${\rm L}$ and ${\cal A}$ on $N$ as in \cite{Temam1999,Moise2001}.
For simplicity, we also assume 
that either (i) $L$, $\nu$, $\kappa$ are not proportional to $\pi$ or that (ii) $N^2\geq \frac{8}{7}\frac{\alpha^2\nu}{\kappa}$ where $N$ denotes the largest Fourier mode in the Galerkin approximation.

We make the ansatz of a naive perturbation expansion, \bsplitl{
\tilde{u}^\epsilon
	= \tilde{u}^0
	+ \epsilon\tilde{u}^1
	+ \epsilon^2\tilde{u}^2
	+ \dots\,.
}{NaEx}
for $\tilde{u}$ in \reff{FaepsKS}. After substituting \reff{NaEx} into \reff{FaepsKS} we formally obtain the following sequence of problems,
\bsplitl{
\partial_s\tilde{u}^0
	+ {\rm L}\tilde{u}^0
	& = 0\,,
\\
\partial_s\tilde{u}^1
	+{\rm L}\tilde{u}^1
	&= {\rm F}(\tilde{u}^0)
	-{\cal A}\tilde{u}^0\,,
}{OrEqs}
and so on.

Formally, the solution of \reff{OrEqs}$_1$ for the initial condition $\tilde{u}^0(0)=u_0$ is,
\bsplitl{
\tilde{u}^0(s)
	= {\rm exp}\brkts{-{\rm L}s}u_0\,.
}{0InCo}
Equation \reff{0InCo} can be equivalently written by,
\bsplitl{
\tilde{v}^0(s)
	= v_0\,,
&\\
\tilde{w}^0(s)
	= {\rm exp}\brkts{-\Aw s}w_0\,.
}{0InCo1}
We solve equation \reff{OrEqs}$_2$ with the variation of constants formula,
\bsplitl{
\tilde{u}^1(s)
	= {\rm exp}\brkts{-{\rm L}s}
	\int_0^s{\rm exp}({\rm L}\sigma)
		\ebrkts{
			{\rm F}({\rm exp}(-{\rm L}\sigma)u_0)-{\cg A}{\rm exp}(-{\rm L}\sigma)u_0
		}\,d\sigma\,, }{u1} where $\tilde{u}^1(0)=0$, since we are
interested in approximations up to $O(\epsilon)$ such that $\tilde{u}^1(0)$ is irrelevant and can be taken to
be zero, see \cite{Moise2001}. We note that 
%${\cg A}{\rm exp}(-\Aw\sigma)=0$ (since ${\rm P}_N{\rm Q}_N=0$) such that 
${\cg
A}{\rm exp}(-{\rm L}\sigma)={\cg A}$ and we decompose the rest of
the integrand in \reff{u1} as, \bsplitl{ {\rm exp}({\rm
L}\sigma){\rm F}({\rm exp}(-{\rm L}\sigma)u_0)
	- {\cg A}v_0
	=: {\rm F}_R(u_0)
	+ \tilde{\rm F}_{NR}(\sigma,u_0)\,, }{F_R} where ${\rm
F}_R(u_0)$ represents the part  independent of $\sigma$ on the left
hand side of \reff{F_R} and $\tilde{\rm F}_{NR}$ the rest. Using
standard RG terminology, we refer to ${\rm F}_R$ as the ``resonant"
and $\tilde{\rm F}_{NR}$ as the ``non-resonant" term.

Using \reff{0InCo}, \reff{u1}, and \reff{F_R} in \reff{NaEx} provides the following Duhamel's form of
the formal perturbation expansion for $\tilde{u}=\tilde{u}^\epsilon$,
\bsplitl{
\tilde{u}^\epsilon(s)
	= {\rm exp}(-{\rm L}s)\brkts{
		u_0
		+ \epsilon s {\rm F}_R(u_0)
		+\epsilon \int_0^s\tilde{\rm F}_{NR}(\sigma,u_0)\,d\sigma
	} + {\cg O}(\epsilon^2)\,. }{uEps} The key idea is now to remove
the secular term $\epsilon s{\rm F}_R(u_0)$ which grows in time. To
this end we define the ``renormalized function"
$\tilde{U}=\tilde{U}(s)$ as the solution of, \bsplitl{
\partial_s\tilde{U}
	& = \epsilon {\rm F}_R(\tilde{U})\,,
\\
\tilde{U}(0)
	& = u_0\,.
}{RGeq}
The equation for the slow variable $U(t)=\tilde{U}(t/\epsilon)$ correspondingly satisfies,
\bsplitl{
\partial_tU
	& = {\rm F}_R(U)\,,
\\
U(0)
	& = u_0\,.	
}{RGeq1}
Let us derive the explicit form of the RG equation for our problem. With the expressions
for ${\rm L}$ and ${\rm F}$, and the identity $u_0 = v_0 + w_0$, we get,
\bsplitl{
{\rm exp}({\rm L}\sigma)
	{\rm F}({\rm exp}(-{\rm L}\sigma)u_0)
	= {\rm exp}({\rm L}\sigma)
	\brkts{
	\begin{matrix} -\PB{v_0+{\rm exp}(-{\rm L}\sigma)w_0}
	\\
	-\QB{v_0+{\rm exp}(-{\rm L}\sigma)w_0}
	\end{matrix}
	}
	\,.
}{eFe}
Next we identify the resonant terms, i.e., ${\rm F}_R(u_0)$. With the Fourier series expansion,
\bsplitl{
\phi(x)
	= \sum_{k\in\mathbb{Z}} {\rm
exp}\brkts{i\frac{k}{\alpha}x}\phi_k\,, }{FS} we have, \bsplitl{
B(\phi,\psi)
	& = i\lambda \sum_{k\in\mathbb{Z}}{\rm exp}\brkts{i\frac{k}{\alpha}x}\phi_k
	\sum_{l\in\mathbb{Z}}{\rm exp}\brkts{i\frac{l}{\alpha}x}\frac{l}{\alpha}\psi_l
\\
	& = i\lambda \sum_{j\in \mathbb{Z}} {\rm exp}\brkts{i\frac{j}{\alpha}x}
	\sum_{k+l=j}\brkts{\phi_k\frac{ l}{\alpha}}\psi_l
	\,.
}{B}
As a consequence, we end up with the expressions,
\bsplitl{
& \QB{v_0,{\rm exp}(-{\rm L}\sigma)w_0}
	 = i\lambda \sum_{\av{j}>N} {\rm exp}\brkts{i\frac{j}{\alpha}x}
	\sum_{\substack{k+l=j \\
		\av{k}\leq N <\av{l}}
		}
		\brkts{v_{0k}\frac{l}{\alpha}}{\rm exp}(-\sigma\rho^w_l)w_{\new{0}l}\,,
\\
& \QB{\rm exp(-{\rm L}\sigma)w_0,{\rm exp}(-{\rm L}\sigma)w_0}
	 = i\lambda \sum_{\av{j}>N} {\rm exp}\brkts{i\frac{j}{\alpha}x}
\qquad\qquad\qquad
\\&\qquad\qquad\qquad\qquad
	\sum_{\substack{k+l=j \\
		\av{k},\av{l}> N }
		}
		\brkts{{\rm exp}(-\sigma\rho^w_k)w_{0k}\frac{l}{\alpha}}{\rm
exp}(-\sigma\rho^w_l)w_{\new{0}l}
	\,. 
}{Bs} 
The resonant terms in the first
sum are the terms for which $\rho^w_l=\rho^w_j$ holds. 
\new{We note that one also needs to look at the skew-symmetric bilinear 
form $\QB{{\rm exp}(-{\rm L}\sigma)w_0,v_0}$ which leads to the same resonance 
condition,} this means,
$\brkts{-\av{\frac{l}{\alpha}}^2-i\delta\brkts{\frac{l}{\alpha}}^3+\kappa\av{\frac{l}{\alpha}}^4}
=
\brkts{-\av{\frac{j}{\alpha}}^2-i\delta\brkts{\frac{j}{\alpha}}^3+\kappa\av{\frac{j}{\alpha}}^4}$.
Since $\nu,\delta,\kappa>0$, the following set characterizes the
resonant indices, \bsplitl{ R_1(j) & :=
	\brcs{(k,l)\,\bigr|\, k=0,\,j=l,\,\av{l}>N}\,.
}{LeqK}
The condition $\rho^w_k+\rho^w_l=\rho^w_j$ characterizes the resonant terms in the second sum
of \reff{Bs}, i.e.,
$\brkts{\frac{k}{\alpha}}^n+\brkts{\frac{l}{\alpha}}^n=\brkts{\frac{j}{\alpha}}^n$ for $n=2,3,4$ needs to hold at the same time. 
Assuming that this condition holds for $n=2$, we immediately obtain an additional requirement $\av{\frac{j}{\alpha}}^4=\av{\frac{k}{\alpha}}^4+\av{\frac{l}{\alpha}}^4+2\av{\frac{k}{\alpha}}^2\av{\frac{l}{\alpha}}^2$, which holds with respect to the set of resonant indices defined by,
\bsplitl{
R_2(j) :=
	\brcs{(k,l)\,\bigr|\, k=0, l=j,\,\av{k},\av{l}>N}\cup\brcs{(k,l)\,\bigr|\, l=0,\, k=j,\,\av{k},\av{l}>N}
	=\emptyset\,,
}{KeqLeq0}
since $\av{k},\av{l}>N$. For a rigorous and detailed proof we refer to the Appendix. We immediately recognize that \reff{KeqLeq0} also 
justifies our assumption on the case $n=2$ above.

These considerations determine the resonant part of ${\rm F}$ by,
\bsplitl{
{\rm F}_R(u_0)
	= \ebrkts{\begin{matrix}
	-\PB{v_0} - \Av v_0
	\\
	-{\rm Q}_N{\rm B}_1(v_0,w_0)
%		-{\rm QB}_2(w_0,w_0)
	\end{matrix}}\,,
}{FR}
where ${\rm B}_1$ is %and ${\rm B}_2$ are
given by its Fourier series expansions for the corresponding index set $R_1(j)$, i.e., %and $R_2(j)$, respectively,
\bsplitl{
{\rm Q}_N{\rm B}_1(v_0,w_0)
	& = 2i\lambda\sum_{\av{j}>N}{\rm e}^{i\frac{j}{\alpha}x}
	%\sum_{%\substack{
	%(k,l)\in R_1(j)
	%%}
	%}
	%\brkts{
	\brkts{v_{00}\frac{j}{\alpha}}w_{0j}
	%+\brkts{w_{0l}\frac{j}{\alpha}}v_{0k}
	%}
	\,.
%\\
%{\rm QB}_2(w_0,w_0)
%	& = 2i\lambda\sum_{\av{j}>N}{\rm e}^{i\frac{j}{\alpha}x}
%	%\sum_{\substack{
%	%k+l=j\\
%	%\av{l}=\av{j}\\
%	%\av{k}\leq N<\av{l}
%	%}}
%	%\brkts{
%	\brkts{w_{00}\frac{j}{\alpha}}w_{0j}
%	%}
%	\,.
}{QB1}
Equation \reff{F_R} and the above consideration give the non-resonant term by,
\bsplitl{
\tilde{\rm F}_{NR}(\sigma,u_0)
	= \ebrkts{\begin{matrix}
		-\new{\PB{v_0+{\rm e}^{-\Aw\sigma}w_0%,v_0+{\rm e}^{-\Aw\sigma}w_0
		}}
		+\new{\PB{v_0}}%,v_0}
		+\tilde{\rm A}_vv_0
%	-\PB{v_0,{\rm e}^{-\Aw\sigma}w_0}
%		-\PB{{\rm e}^{-\Aw\sigma}w_0,v_0}
%		-\PB{{\rm e}^{-\Aw\sigma}w_0,{\rm e}^{-\Aw\sigma}w_0}
	\\
	-{\rm Q}_N\tilde{\rm B}_1(v_0,w_0)
	-\new{{\rm Q}_N\tilde{\rm B}_2(w_0)}%,w_0)
	\end{matrix}}\,,
}{FNR}
where ${\rm Q}_N\tilde{B}_1$ and ${\rm Q}_N\tilde{\rm B}_2$ are defined by their Fourier series
expansions,
\bsplitl{
{\rm Q}_N\tilde{\rm B}_1(v_0,w_0)
	& = i\lambda\sum_{\av{j}>N}{\rm e}^{i\frac{j}{\alpha}x}
		\sum_{
		\substack{
			k+l=j\\
			\av{l}\neq\av{j}\\
			\av{k}\leq N<\av{l}
		}
		}\brkts{
		\brkts{v_{0k}\frac{j}{\alpha}}w_{0l}
		+\brkts{w_{0l}\frac{j}{\alpha}}v_{0k}
		}{\rm e}^{(\rho^w_j-\rho^w_l)\sigma}\,,
\\
{\rm Q}_N\tilde{\rm B}_2(w_0)
	& = i\lambda\sum_{\av{j}>N}{\rm e}^{i\frac{j}{\alpha}x}
		\sum_{
		\substack{
			k+l=j\\
			|k/\alpha|^n+|l/\alpha|^n\neq |j/\alpha|^n\,\, {\rm for}\,\, n=2,3,4\\
			\av{k},\av{l}>N
		}
		}
		\brkts{w_{0k}\frac{j}{\alpha}}w_{0l}
		{\rm e}^{(\rho^w_j-\rho^w_k-\rho^w_l)\sigma}\,.
}{QB1QB2}
With \reff{FR} the RG equation for our problem is in the fast time scale,
\bsplitl{
\partial_s\tilde{V}
	+\epsilon\Av\tilde{V}
	+\new{\epsilon\PB{\tilde{V}}} %,\tilde{V}}
	& = 0\,,
\\
\partial_s \tilde{W}
	+\epsilon{\rm Q}_N{\rm B}_1(\tilde{V},\tilde{W})
%	+\epsilon{\rm Q}_N{\rm B}_2(\tilde{W},\tilde{W})
	& =0\,,
}{rgeq}
or after rescaling by $t=\epsilon s$, and denoting $V={\rm P}_NU$,
$W={\rm Q}_NU$,
\bsplitl{
\partial_t V
	+{\rm A}_vV
	+\new{{\rm P}_N{\rm B}(V)}%,V)
	& = 0\,,
\\
\partial_t W
	+ {\rm Q}_N{\rm B}_1(V,W)
%	+{\rm QB}_2(W,W)
	& = 0\,.
}{rgeq1}

\begin{rem}\label{rem:2} 1) The above considerations for the resonant and non-resonant terms can easily be extended to \new{situations where we replace the linear 
spatial differential operator ${\rm A}$ with} 
pseudodifferential operators \new{${\rm P}(\partial/\partial x)$} with symbol $p(\xi)$ of the
form, \bsplitl{ {\rm Re}\, p(i\xi) \geq c\av{\xi}^{\nu}\,,
	\qquad\av{\xi}\to\infty\,,
}{pxi}
where $\nu>3/2$. \new{The requirement \reff{pxi} on the ${\rm P}(\partial/\partial x)$ guarantees 
the well-posedness (of such generalized Burgers equations) \cite{Tadmor1986}.}
One only needs to adapt the sets for the resonant indices, see \reff{LeqK}
and \reff{KeqLeq0}.\\
2) Note that the $V$-equation in the RG equation \reff{rgeq1} is
simply the Galerkin approximation of the gKS equation \reff{absKS}.
\end{rem}

The special structure of the renormalization equation \reff{rgeq1}$_2$ for the unresolved (fast) variable allows to give an explicit expression for its solution. After rewriting \reff{rgeq1}$_2$ by
\bsplitl{
\partial_t W_j(t)
	+ 2i\lambda\frac{j}{\alpha}V_0(t)W_j(t)=0\,,
}{Wj}
where $V_0(t)=const.$ due to conservation of mass \reff{MaCoL}, we immediately obtain the solution,
\bsplitl{
W_j(t)
	= c^j_We^{i2\lambda\frac{j}{\alpha}V_0t}\,,
	\qquad
	c^j_W := W_j(0)\,.
}{Wj1}
With \reff{Wj1} the solution of \reff{rgeq1}$_2$ becomes,
\bsplitl{
W(x,t)
	= \sum_{\av{j}>N}c^j_We^{i\frac{j}{\alpha}\brkts{x+2\lambda V_0t}}\,.
}{Wj2}
Equation \reff{Wj2} shows that there is no restriction on the definition of the
mass $V_0$. In the context of stochastic mode reduction the situation
is different, see Section \ref{sec:StMoRe}.

\subsection{Construction of approximate/renormalized solutions}\label{sec:ApSo}
In order to define renormalized solutions we have first to determine the non-resonant term
$\tilde{\rm F}_{NR}(\sigma,u_0)$ given by \reff{FNR}. In fact, we are interested in,
\bsplitl{
{\rm F}_{NR}(s,U)
	= \int_0^s\tilde{\rm F}_{NR}(\sigma,U)\,d\sigma\,.
}{FNR1}
Let
\bsplitl{
{\rm PF}_{NR}(s,U)
	& = 2i\lambda\sum_{\av{j}\leq N}{\rm e}^{i\frac{j}{\alpha}x}
		\sum_{\substack{
			k+l=j\\
			\av{k}\leq N<\av{l}
		}}\frac{{\rm e}^{-\rho^w_ls}}{\rho^w_l}V_k\frac{j}{\alpha}W_{l}
\\&\quad
	+ i\lambda\sum_{\av{j}\leq N}{\rm e}^{i\frac{j}{\alpha}x}
		\sum_{\substack{
			k+l=j\\
			\av{k},\av{l}> N
		}}\frac{{\rm e}^{-(\rho^w_k+\rho^w_l)s}}{\rho^w_k+\rho^w_l}
			W_k\frac{j}{\alpha}W_{l}\,,
\\
{\rm QF}_{NR}(s,U)
	& =  -2i\lambda\sum_{\av{j} > N}{\rm e}^{i\frac{j}{\alpha}x}
		\sum_{\substack{
			k+l=j\\
			\av{k}\leq N<\av{l}\\
			\av{l}\neq\av{j}
		}}\frac{{\rm e}^{(\rho^w_j-\rho^w_l)s}-1}{\rho^w_j-\rho^w_l}
			V_k\frac{j}{\alpha}W_{l}
\\&\quad
	- i\lambda\sum_{\av{j}>N}{\rm e}^{i\frac{j}{\alpha}x}
		\sum_{\substack{
			k+l=j\\
			|k/\alpha|^n+|l/\alpha|^n\neq |j/\alpha|^n\,\, {\rm for}\,\, n=2,3,4\\
			\av{k},\av{l}> N
		}}\frac{{\rm e}^{(\rho^w_j-\rho^w_k-\rho^w_l)s}-1}{\rho^w_j-\rho^w_k-\rho^w_l}
			W_k\frac{j}{\alpha}W_{l}\,.
}{PQ_FNR}
Now, we are able to define the approximate solution suggested by the
RG theory. We obtain,
\bsplitl{
{u}^\epsilon(t)
	= {\rm e}^{-{\rm L}\frac{t}{\epsilon}}\brkts{
		U(t)
		+\epsilon {\rm F}_{NR}(t/\epsilon,U(t))
	}\,,
}{ApSo}
or with respect to fast $\ol{w}^\epsilon$ and slow variables $\ol{v}^\epsilon$,
\bsplitl{
{v}^\epsilon
	& ={\rm P}_N{u}^\epsilon
	= V(t) + \epsilon {\rm PF}_{NR}(t/\epsilon,U(t))\,,
\\
{w}^\epsilon
	& = {\rm Q}_N{u}^\epsilon
	= {\rm e}^{-{\rm Q}_N{\rm A}t}\brkts{
	W(t)+\epsilon{\rm QF}_{NR}(t/\epsilon,U(t))
	}\,.
}{ApSo1}
We note that the initial data are defined by,
\bsplitl{
{v}^\epsilon(0)
	& = V(0) + \epsilon{\rm PF}_{NR}(0,U(0)) = v_0+\epsilon{\rm PF}_{NR}(0,u_0)\,,
\\
{w}^\epsilon(0)
	& = W(0) + \epsilon{\rm QF}_{NR}(0,U(0)) = w_0\,.
}{InApSo1}

\section{The renormalized gKS equation and approximation error}\label{sec:Cmp}
After inserting \reff{ApSo} into \reff{absKS} we obtain the
following perturbed gKS equation, \bsplitl{
\partial_t{u}^\epsilon
	+{\rm A}{u}^\epsilon
	+{\rm B}({u}^\epsilon,{u}^\epsilon)
	= -\epsilon {\rm R}_\epsilon(U(t))\,,
}{PeKSeq}
where ${\rm A}$ and ${\rm B}$ are defined by
\reff{absOp} and ${\rm R}_\epsilon$ is given by,
\bsplitl{
{\rm R}_\epsilon
	& = {\rm B}\brkts{{\rm e}^{-{\rm L}\frac{t}{\epsilon}}U(t),{\rm e}^{-{\rm L}\frac{t}{\epsilon}}
		{\rm F}_{NR}(t/\epsilon,U(t))
	}
	+{\rm B}\brkts{{\rm e}^{-{\rm L}\frac{t}{\epsilon}}{\rm F}_{NR}(t/\epsilon,U(t)),{\rm e}^{-{\rm L}\frac{t}{\epsilon}}
		U(t)
	}
\\&\quad
	+{\rm B}\brkts{{\rm e}^{-{\rm L}\frac{t}{\epsilon}}{\rm F}_{NR}(t/\epsilon,U(t)),{\rm e}^{-{\rm L}\frac{t}{\epsilon}}
		{\rm F}_{NR}(t/\epsilon,U(t))
	}
	-{\rm A}{\rm PF}_{NR}(t/\epsilon,U(t))
\\&\quad
	-{\rm e}^{-{\rm L}\frac{t}{\epsilon}}\delta_U{\rm F}_{NR}(t/\epsilon,U(t))\partial_tU\,.
}{Reps}

Next, we study estimates on the approximate solutions ${u}^\epsilon$ of
equation \reff{PeKSeq}. In a first step, we need to investigate the non-resonant
part ${\rm F}_{NR}$ of the approximate solutions.

\medskip

\begin{lem}\label{thm:Fest}
Let $p\geq 2$ and let \new{the initial condition satisfy} $g\in H^{q}({\cg P}_\alpha)$ with $q\geq 4$. Assume that the solution of the RG equation
\reff{rgeq1} satisfies $U(t)\in H^p({\cg P}_\alpha)$ for all $t>0$. For $N$ large enough
there exist
two \new{uniform} constants $c_1$ and $C_2$, where $C_2$ depends on the initial conditions, and $c_1$ depends
only on ${\cg P}_\alpha$, but both independent of $N$, such that the
following estimates are true for all $t>0$,
\bsplitl{
\N{{\rm P}_N{\rm F}_{NR}(t/\epsilon,U(t))}{H^p}
	& \leq C_2 {\rm e}^{-c_1N^4t}\,,
\\
\N{{\rm e}^{-{\rm Q}_NAt}{\rm Q}_N{\rm F}_{NR}(t/\epsilon,U(t))}{H^p}
        & \leq C_2 {\rm e}^{-c_1N^4t}\,.
}{ApEs}
\end{lem}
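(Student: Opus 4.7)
The plan is to estimate the Fourier-series expressions in \reff{PQ_FNR} term by term, exploiting the fact that the rescaled time $s=t/\epsilon=N^{4}t$ absorbs the $1/N^{4}$ prefactor in $\rho^{w}_{k}$, so that each exponential ultimately decays at a rate proportional to $N^{4}$. The argument rests on (a) lower bounds for the real parts of the eigenvalues $\rho^{w}_{k}$, (b) a mean-value identity that cancels the non-resonant denominators arising in $\mathrm{QF}_{NR}$, and (c) standard Sobolev product bounds for the residual bilinear Fourier sums.

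\emph{Step 1 (eigenvalue lower bounds).} Under the standing assumption $N^{2}\geq 8\alpha^{2}\nu/(7\kappa)$, for every $|l|>N$ one has
\[
\mathrm{Re}(N^{4}\rho^{w}_{l})=\kappa\,|l/\alpha|^{4}-\nu\,|l/\alpha|^{2}\geq\tfrac{\kappa}{8}|l/\alpha|^{4}\geq c\,N^{4},
\]
and analogously $\mathrm{Re}(N^{4}(\rho^{w}_{k}+\rho^{w}_{l}))\geq c(|k|^{4}+|l|^{4})/\alpha^{4}\geq c\,N^{4}$ for $|k|,|l|>N$, with $c=\kappa/(8\alpha^{4})$ independent of $N$. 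Consequently $|\rho^{w}_{l}|$ and $|\rho^{w}_{k}+\rho^{w}_{l}|$ are uniformly bounded below, and each factor $e^{-\rho^{w}_{l}s}/\rho^{w}_{l}$ or $e^{-(\rho^{w}_{k}+\rho^{w}_{l})s}/(\rho^{w}_{k}+\rho^{w}_{l})$ appearing in $\mathrm{PF}_{NR}$ is bounded pointwise by $C\,e^{-c_{1}N^{4}t}$. This immediately settles \reff{ApEs}$_{1}$ modulo Step~3.

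\emph{Step 2 (mean-value cancellation in $\mathrm{QF}_{NR}$).} The semigroup $e^{-\mathrm{Q}_{N}\mathrm{A}t}$ acts on the $j$-th Fourier mode by $e^{-N^{4}\rho^{w}_{j}t}$, so the combined factor in $e^{-\mathrm{Q}_{N}\mathrm{A}t}\mathrm{Q}_{N}\mathrm{F}_{NR}$ simplifies to
\[
e^{-N^{4}\rho^{w}_{j}t}\cdot\frac{e^{(\rho^{w}_{j}-\rho^{w}_{l})N^{4}t}-1}{\rho^{w}_{j}-\rho^{w}_{l}}=\frac{e^{-N^{4}\rho^{w}_{l}t}-e^{-N^{4}\rho^{w}_{j}t}}{\rho^{w}_{j}-\rho^{w}_{l}}.
\]
Writing $e^{a}-e^{b}=(a-b)\int_{0}^{1}e^{(1-\tau)b+\tau a}\,d\tau$ with $a=-N^{4}\rho^{w}_{l}t$, $b=-N^{4}\rho^{w}_{j}t$ and using Step~1 gives
\[
\bigl|e^{-N^{4}\rho^{w}_{l}t}-e^{-N^{4}\rho^{w}_{j}t}\bigr|\leq N^{4}t\,|\rho^{w}_{j}-\rho^{w}_{l}|\,e^{-cN^{4}t},
\]
so after cancellation of $|\rho^{w}_{j}-\rho^{w}_{l}|$ the whole factor is $\leq CN^{4}t\,e^{-cN^{4}t}\leq C'\,e^{-c_{1}N^{4}t}$ for any $c_{1}<c$, via the universal inequality $x\,e^{-cx}\leq C_{c_{1}}e^{-c_{1}x}$. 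The triple-index contribution, whose denominator $\rho^{w}_{j}-\rho^{w}_{k}-\rho^{w}_{l}$ is nonzero on the non-resonant set identified in \reff{KeqLeq0}, is handled by the identical rewriting with $\rho^{w}_{l}$ replaced by $\rho^{w}_{k}+\rho^{w}_{l}$.

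\emph{Step 3 ($H^{p}$ bound of the residual bilinear sums).} After extracting $e^{-c_{1}N^{4}t}$, the residual Fourier sums in \reff{ApEs} are the Fourier coefficients of the projected bilinear forms $\mathrm{P}_{N}\partial_{x}(VV),\,\mathrm{P}_{N}\partial_{x}(VW),\,\mathrm{P}_{N}\partial_{x}(WW)$ and their $\mathrm{Q}_{N}$ counterparts. Using the Banach-algebra property of $H^{p}(\mathcal{P}_{\alpha})$ for $p\geq 2>1/2$ together with $\|\partial_{x}f\|_{H^{p}}\leq\|f\|_{H^{p+1}}$, each such term is bounded in $H^{p}$ by $C\|U\|_{H^{p+1}}^{2}$. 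Since $g\in H^{q}$ with $q\geq 4$, a standard a priori estimate for the RG equations \reff{rgeq1} controls $\sup_{t\geq 0}\|U(t)\|_{H^{p+1}}$ in terms of $\|g\|_{H^{q}}$, producing the constant $C_{2}$ in \reff{ApEs} independent of $N$.

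The main obstacle is Step~2: a naive bound on $1/|\rho^{w}_{j}-\rho^{w}_{l}|$ grows with $N$, since the smallest nonzero differences of quartic eigenvalues scale like $(|j|^{2}+|l|^{2})/(N^{4}\alpha^{4})\gtrsim N^{-2}$, which would break the $N$-uniformity of $C_{2}$. The rewriting into $(e^{-N^{4}\rho^{w}_{l}t}-e^{-N^{4}\rho^{w}_{j}t})/(\rho^{w}_{j}-\rho^{w}_{l})$ followed by the mean-value integral exchanges this small divisor for a matching factor in the numerator, reducing the estimate to the universal bound on $xe^{-cx}$. The analogous point for $\tilde{\mathrm{B}}_{2}$ requires, in addition, that $\mathrm{Re}(N^{4}(\rho^{w}_{k}+\rho^{w}_{l}-\rho^{w}_{j}))$ has controlled sign on the non-resonant index set so that the mean-value step still delivers the $e^{-cN^{4}t}$ decay.
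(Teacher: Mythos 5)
Your proposal is correct and follows essentially the same route as the paper's (much terser) proof: bounding the factors $e^{-\rho^w_l t/\epsilon}/\rho^w_l$ directly for ${\rm P}_N{\rm F}_{NR}$, and for the ${\rm Q}_N$ part combining the semigroup with the non-resonant denominators — your mean-value integral in Step 2 is just the careful complex-valued version of the paper's inequalities $1-{\rm e}^{-x}\leq x$ and $x{\rm e}^{-x}\leq 1/{\rm e}$. Your Step 3 likewise matches the paper's reduction to $\N{V\cdot\nabla W}{H^p}$-type bounds absorbed into $C_2$.
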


\medskip

\begin{remark}\label{rem:lem3.1} \new{\emph{(Initial conditions)} We note that the 
regularity assumed above in Lemma \ref{thm:Fest} and in the results below is 
slightly higher since $\g\in H^p({\cg P}_\alpha)$ would be enough. This regularity 
assumption enters via an argument based on Gronwall's inequality.
}
\end{remark}

\medskip

\begin{proof}
$c_1$, and $C_2$ represent generic constants independent of $N$
(or $\epsilon$). We first derive estimate \reff{ApEs}$_1$. With the
expression \reff{PQ_FNR}$_1$ we immediately obtain, \bsplitl{
\N{{\rm P}_N{\rm F}_{NR}(t/\epsilon,U(t))}{H^p}
	\leq \new{C_2} {\rm e}^{-\new{c_1}N^4t}\brkts{
	\N{V\cdot\nabla W}{H^p}
	+\N{W\cdot\nabla V}{H^p}
	+\N{W\cdot\nabla W}{H^p}
	}\,,
}{PFest}
where we used the fact that we have the following bound,
\bsplitl{
\frac{{\rm e}^{-\rho^w_lt/\epsilon}}{\rho^w_l}
	=\frac{{\rm e}^{-(-\nu\av{\frac{l}{\alpha}}^2
			-i\delta\brkts{\frac{l}{\alpha}}^3
			+\kappa\av{\frac{l}{\alpha}}^4)t}}{
	1/N^4(-\nu\av{\frac{l}{\alpha}}^2
                        -i\delta\brkts{\frac{l}{\alpha}}^3
                        +\kappa\av{\frac{l}{\alpha}}^4)}
	\leq
	\new{C_2}{\rm e}^{-\new{c_1}N^4t}\,.
}{ExpEst}
The last inequality follows due to $N< |l|$. The second estimate
\reff{ApEs}$_2$ can be obtained in the same way by using the inequalities
$1-{\rm e}^{-x}\leq x$ for all $x\geq 0$ and
$x{\rm e}^{-x}\leq \frac{1}{\rm e}$ for all $x\geq 0$. We refer the interested reader to \cite{Temam1999}
for a deeper consideration.
\end{proof}

\medskip

The bounds of Lemma \ref{thm:Fest} allow us to control $R_\epsilon$ in the spirit of \cite{Temam1999}.

\medskip

\begin{lem}\label{thm:Rest}
For $N>0$ and \new{for initial conditions} $g\in H^q({\cg P}_\alpha)$ for $q\geq 4$, there
exist two constants $c_1$ and $C_2$ independent of $N$, such that the following
estimate holds true for all $t\geq 0$,
\bsplitl{
\N{R_\epsilon(t)}{L^2}
	\leq C_2{\rm e}^{-c_1N^4t}\,.
}{ApSolEs}
\end{lem}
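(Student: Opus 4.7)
The plan is to decompose $R_\epsilon$ as in \reff{Reps} into five contributions and bound each one separately in $L^2$, using Lemma~\ref{thm:Fest} to extract the exponential decay factor ${\rm e}^{-c_1 N^4 t}$.

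First, note that the semigroup ${\rm e}^{-{\rm L}t/\epsilon}$ acts as the identity on ${\rm P}_N$ and as ${\rm e}^{-{\rm Q}_N{\rm A}\,t}$ on ${\rm Q}_N$, so Lemma~\ref{thm:Fest} immediately yields
\bsplit{
\N{{\rm e}^{-{\rm L}t/\epsilon}{\rm F}_{NR}(t/\epsilon,U(t))}{H^p}
    \leq C_2\,{\rm e}^{-c_1 N^4 t}
}
for any $p\geq 2$ compatible with the regularity of the initial data. For the three Burgers-type terms ${\rm B}({\rm e}^{-{\rm L}t/\epsilon}U,{\rm e}^{-{\rm L}t/\epsilon}{\rm F}_{NR})$, ${\rm B}({\rm e}^{-{\rm L}t/\epsilon}{\rm F}_{NR},{\rm e}^{-{\rm L}t/\epsilon}U)$ and ${\rm B}({\rm e}^{-{\rm L}t/\epsilon}{\rm F}_{NR},{\rm e}^{-{\rm L}t/\epsilon}{\rm F}_{NR})$, I would exploit the one-dimensional embedding $H^1({\cg P}_\alpha)\hookrightarrow L^\infty({\cg P}_\alpha)$ together with $\|{\rm B}(\phi,\psi)\|_{L^2}=\lambda\|\phi\psi_x\|_{L^2}\leq C\|\phi\|_{H^1}\|\psi\|_{H^1}$. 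One factor then produces ${\rm e}^{-c_1 N^4 t}$ by Lemma~\ref{thm:Fest}, while the other is absorbed into $C_2$ via an a priori $H^1$ bound on $U$ coming from the RG equation \reff{rgeq1} and the assumed regularity $g\in H^q$.

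For the linear term $-{\rm A}\cdot{\rm PF}_{NR}$, since ${\rm A}$ contains up to fourth-order derivatives I would invoke Lemma~\ref{thm:Fest} with $p=4$, giving $\|{\rm A}\,{\rm PF}_{NR}\|_{L^2}\leq C\|{\rm PF}_{NR}\|_{H^4}\leq C_2{\rm e}^{-c_1 N^4 t}$; this is why the hypothesis $q\geq 4$ is imposed. The final term $-{\rm e}^{-{\rm L}t/\epsilon}\,\delta_U{\rm F}_{NR}\,\partial_tU$ is the most delicate one. The idea is that ${\rm F}_{NR}$ is bilinear in $(V,W)$, so $\delta_U{\rm F}_{NR}$ is linear and inherits the exponential prefactors ${\rm e}^{-\rho^w_l s}$, ${\rm e}^{(\rho^w_j-\rho^w_l)s}$, etc., from \reff{PQ_FNR}. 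For $\partial_tU$ I would substitute directly from the RG equations \reff{rgeq1}, so that
\bsplit{
\partial_tU
    = -{\rm A}_vV - {\rm P}_N{\rm B}(V) - {\rm Q}_N{\rm B}_1(V,W),
}
a quantity bounded in $L^2$ by a constant depending on $\|U\|_{H^4}$. Exactly the same bookkeeping of the exponential factors that produced Lemma~\ref{thm:Fest} then yields a bound of the form $C_2{\rm e}^{-c_1 N^4 t}$ for this term as well.

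The main obstacle is this last term: one has to verify that differentiating ${\rm F}_{NR}$ with respect to $U$ does not destroy the decay ${\rm e}^{-c_1 N^4 t}$ and that the cubic structure obtained after inserting \reff{rgeq1} for $\partial_tU$ can still be estimated in $L^2$ using only the regularity guaranteed by the hypothesis on $g$. Summing the five contributions gives the desired bound \reff{ApSolEs}, with generic constants $c_1$ and $C_2$ independent of $N$ (hence of $\epsilon$).
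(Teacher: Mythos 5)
Your proposal is correct and follows essentially the same route as the paper, whose own proof is only the two-line remark that one ``takes into account the expression of $R_\epsilon$'' from \reff{Reps} and applies Lemma~\ref{thm:Fest} term by term. Your version simply supplies the details the paper omits (the $H^1\hookrightarrow L^\infty$ product estimates for the Burgers terms, the $p=4$ use of Lemma~\ref{thm:Fest} for the ${\rm A}\,{\rm PF}_{NR}$ term, and the treatment of $\delta_U{\rm F}_{NR}\,\partial_tU$ via the RG equations), all of which are consistent with the intended argument.
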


\begin{proof}
The proof follows in the same way as the proof of Lemma \ref{thm:Fest}. We only
need to take into account the expression of $R_\epsilon$ and apply
Lemma \ref{thm:Fest}.
\end{proof}

\medskip

Subsequently, we write $\Ll{\cdot}$ and $\brkts{\cdot,\cdot}$ for the 
$L^2({\cal P}_\alpha)$-norm and the $L^2({\cal P}_\alpha)$-scalar product, respectively.

\medskip

\begin{lem}\label{lem:ApEsRS}
For $0<T^*<\infty$ and \new{initial conditions} $g$ as in Lemma \ref{thm:Rest}, there exists an $0<\epsilon^*<\infty$ such that for $0\leq \epsilon :=1/N^4\leq
\epsilon^*$ solutions to equation \reff{PeKSeq} satisfy
$u^\epsilon\in L^\infty(0,T^*;L^2({\cg P}_\alpha))\cap L^2(0,T^*;H^2({\cg P}_\alpha))$.
\end{lem}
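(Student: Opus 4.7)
The plan is to obtain the regularity of $u^\epsilon$ on $[0,T^*]$ by the classical $L^2$ energy method applied directly to the perturbed equation \reff{PeKSeq}, using that the right-hand side $-\epsilon R_\epsilon$ is exponentially small by Lemma \ref{thm:Rest}. This mirrors the standard well-posedness argument for the gKS equation (see e.g.~\cite{Tadmor1986}); the only new ingredient is the forcing $\epsilon R_\epsilon$, which is harmless because it is uniformly small in $\epsilon$ and integrable in time.

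\textbf{Step 1 (Energy identity).} I would take the $L^2$-scalar product of \reff{PeKSeq} with $u^\epsilon$. The nonlinear contribution vanishes by periodicity, $\brkts{u^\epsilon u^\epsilon_x, u^\epsilon} = \tfrac{1}{3}\int_{{\cg P}_\alpha} ((u^\epsilon)^3)_x\,dx = 0$, as does the dispersive piece $\delta\brkts{\partial_x^3 u^\epsilon,u^\epsilon}$. Integration by parts on the remaining linear terms yields
\bsplit{
\frac{1}{2}\frac{d}{dt}\Ll{u^\epsilon}^2 + \nu\Ll{u^\epsilon_{xx}}^2 - \kappa\Ll{u^\epsilon_x}^2 = -\epsilon\brkts{R_\epsilon,u^\epsilon}\,.
}

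\textbf{Step 2 (Controlling the antidissipative and forcing terms).} The Fourier interpolation $\Ll{u^\epsilon_x}^2 \leq \Ll{u^\epsilon}\,\Ll{u^\epsilon_{xx}}$ (valid on ${\cg P}_\alpha$, with the zero mode contributing nothing to either side) combined with Young's inequality gives $\kappa\Ll{u^\epsilon_x}^2 \leq \tfrac{\nu}{2}\Ll{u^\epsilon_{xx}}^2 + \tfrac{\kappa^2}{2\nu}\Ll{u^\epsilon}^2$. The forcing is bounded via Cauchy--Schwarz and Lemma \ref{thm:Rest}, $|\epsilon\brkts{R_\epsilon,u^\epsilon}| \leq \tfrac{1}{2}\Ll{u^\epsilon}^2 + \tfrac{1}{2}\epsilon^2 C_2^2 e^{-2c_1 N^4 t}$. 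Combining, I arrive at a differential inequality of the form
\bsplit{
\frac{d}{dt}\Ll{u^\epsilon}^2 + \nu\Ll{u^\epsilon_{xx}}^2 \leq C\Ll{u^\epsilon}^2 + \epsilon^2 C_2^2 e^{-2c_1 N^4 t}\,.
}

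\textbf{Step 3 (Gronwall and conclusion).} Dropping the $\nu\Ll{u^\epsilon_{xx}}^2$ term and applying Gronwall's inequality on $[0,T^*]$ yields the $L^\infty(0,T^*;L^2)$ bound, provided the initial datum $u^\epsilon(0) = u_0 + \epsilon\,{\rm PF}_{NR}(0,u_0)$ is bounded in $L^2$ uniformly for $\epsilon\leq\epsilon^*$; this is immediate from Lemma \ref{thm:Fest} evaluated at $t=0$ for $N$ (equivalently $1/\epsilon$) large enough, giving $\Ll{u^\epsilon(0)} \leq \Ll{u_0} + \epsilon C_2$. Reintegrating the differential inequality over $[0,T^*]$ and using the just-obtained uniform $L^2$ control then produces $\int_0^{T^*}\Ll{u^\epsilon_{xx}}^2\,dt < \infty$, which combined with the uniform $L^2$ bound gives $u^\epsilon \in L^2(0,T^*;H^2({\cg P}_\alpha))$.

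\textbf{Main obstacle.} The only delicate ingredient is the handling of the destabilizing $-\kappa\Ll{u^\epsilon_x}^2$ term in Step 2, which is the familiar hurdle in KS-type analysis. On the \emph{finite} interval $[0,T^*]$ this is straightforward via Young's inequality absorption into the fourth-order dissipation, at the cost of the Gronwall factor $e^{CT^*}$. Obtaining a uniform-in-$T^*$ bound would require the far more delicate background-flow or Lyapunov techniques recalled in the introduction \cite{Collet1993,Goodman1994,Otto2009} and is not needed here. The smallness condition $\epsilon \leq \epsilon^*$ enters only to guarantee that $u^\epsilon(0)$ is well-defined and bounded in $L^2$ uniformly in $\epsilon$.
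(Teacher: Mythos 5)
Your proposal is correct and follows essentially the same route as the paper: a formal $L^2$ energy estimate on \reff{PeKSeq}, with the nonlinear and dispersive terms vanishing by periodicity, the forcing controlled via Lemma \ref{thm:Rest}, and a Gronwall argument on $[0,T^*]$ yielding the $L^\infty(0,T^*;L^2)$ bound and, after reintegration, the $L^2(0,T^*;H^2)$ bound. The only (cosmetic) difference is that you absorb the antidissipative term $\kappa\Ll{u^\epsilon_x}^2$ by interpolation and Young's inequality, which works for all $\kappa,\nu>0$, whereas the paper uses the periodic Poincar\'e-type inequality $\Ll{\partial_x u^\epsilon}^2\leq\Ll{\partial_x^2 u^\epsilon}^2$ and restricts attention to one parameter regime, referring to \cite{LARKIN2004,Tadmor1986} for the general case.
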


\medskip

\begin{proof} We give here the elements of the proof for the case  $\kappa> \nu $ and refer to \cite{LARKIN2004,Tadmor1986} where stronger regularity (e.g. $u^{\epsilon}\in L^{\infty}(0;T;H^2(P_{\alpha}))$) and existence results can be found.
We formally test equation
\reff{PeKSeq} with $u^\epsilon$ and using periodicity of ${\cal P}_\alpha$, i.e.,
$\frac{\lambda}{6}\brkts{\partial_x (u^\epsilon)^3,1}=0$, such that,
\bsplitl{
\frac{1}{2}\frac{d}{dt}\Ll{u^\epsilon}^2
	& +(\kappa-\nu)\Ll{\partial_x^2 u^\epsilon}^2
	\leq \frac{\epsilon^2}{2}\Ll{u^\epsilon}^2
	+\frac{1}{2}\Ll{R_\epsilon(U)}^2
\\&
%	+\delta \Ll{\partial^2_x u^\epsilon}\Ll{\partial_x u^\epsilon}
%	+C(\kappa)\Ll{u^\epsilon}^2
%	+\pmb{\nu\Ll{\partial_x u^\epsilon}^2}\,.
}{ueLe1}
where used the inequality 
$\Ll{\nabla u^\epsilon}^2\leq \Ll{\Delta u^\epsilon}^2$ which holds in the periodic case (see \cite[p. 3]{Tadmor1986}).
After defining
\bsplitl{
\beta
	& := 2%\brkts{
	\epsilon^2
	%}
	\,,
\\
\gamma
	& := \frac{1}{2}\Ll{R_\epsilon(U)}^2\,,
}{ag}
we multiply \reff{ueLe1} by ${\rm exp}\brkts{-\int_0^t\beta\,ds}$ such that
\bsplitl{
\frac{1}{2}{\rm exp}\brkts{
		-\beta t
	}\frac{d}{dt}\Ll{u^\epsilon}^2
	\leq
	{\rm exp}\brkts{
		-\beta t
	}\frac{\beta}{2}\Ll{u^\epsilon}^2
	+{\rm exp}\brkts{
		-\beta t
	}\gamma(t)
\,.
}{ueLe2}
Since
\bsplitl{
\frac{d}{dt}\brkts{
		{\rm exp}\brkts{-\beta t}\frac{1}{2}\Ll{u^\epsilon}^2
	}
	= -\beta {\rm exp}\brkts{
		-\beta t
	}\frac{1}{2}\Ll{u^\epsilon}^2
	+{\rm exp}\brkts{
		-\beta t
	}\frac{1}{2}\frac{d}{dt}\Ll{u^\epsilon}\,,
}{prRu}
we can rewrite \reff{ueLe2} as
\bsplitl{
\frac{d}{dt}\brkts{
		{\rm exp}\brkts{-\beta t}
		\frac{1}{2}\Ll{u^\epsilon}^2
	}
	\leq {\rm exp}\brkts{
		-\beta t
	}\gamma(t)\,,
}{ueLe3}
and subsequent integration together with Lemma \ref{thm:Rest} gives,
\bsplitl{
%\frac{1}{2}{\rm exp}\brkts{
%		-\alpha T
%	}
	\Ll{u^\epsilon(T)}^2
	\leq
	C{\rm exp}\brkts{\beta T}
	\int_0^T{\rm exp}\brkts{
		\brkts{\beta -C/\epsilon}t
	}\,dt
	\leq \frac{C}{\beta+C/\epsilon}{\rm exp}\brkts{
		\beta T
	}\,.
}{ueLe4}
For aribtrary $0<T^*<\infty$ we can choose $0\leq \epsilon\leq\epsilon^*:=\frac{1}{{\rm exp}\brkts{\beta T^*}-\beta/CC_\infty}$ where the constant $C_\infty$ is chosen such that,
\bsplitl{
\frac{C}{\beta+C/\epsilon}{\rm exp}\brkts{\beta T^*}
	\leq C_\infty
	< C{\rm exp}\brkts{\beta T^*}/\beta
	\,.
}{ueLe5}
%_______________________________
% Original proof via Gronwall's inequality:
%_______________________________
%With Gronwall�s inequality we obtain,
%\bsplitl{
%\frac{1}{2}\Ll{u^\epsilon}^2
%	\leq {\rm exp}\brkts{C(\alpha,\kappa)+\frac{\epsilon^2}{2}t}
%	\brkts{
%		\Ll{u^\epsilon}^2(0)
%		+\frac{1}{2}\int_0^t\Ll{R_\epsilon(U(s))}^2\,ds
%	}\,,
%}{GI}
%which further simplifies with Lemma \ref{thm:Rest} and
%$\int_0^t{\rm exp}(-Cs/\epsilon)\,ds\leq \epsilon/C$ to,
%\bsplitl{
%\frac{1}{2}\Ll{u^\epsilon}^2
%	\leq {\rm exp}\brkts{(C(\alpha,\kappa)+\frac{\epsilon^2}{2})t}
%	\brkts{
%		\Ll{u^\epsilon}^2(0)
%		+\epsilon C
%	}\,.
%}{ueLe}
%The bound \reff{ueLe2} is even uniform in $\epsilon$, since with \reff{thm:Fest}
%and \reff{InApSo1},
%$\Ll{u^\epsilon}^2(0)\leq \Ll{u_0}^2 +\epsilon^3C$\,.
\end{proof}

\medskip

The reduced equation for the resolved (slow) modes $v^\epsilon$ alone follows immediately after using
\reff{ApSo1}$_1$ and \reff{rgeq1}$_1$, i.e., $V(t)=v^\epsilon(t)-\epsilon {\rm PF}_{NR}(t/\epsilon,U(t))$,
\bsplitl{
\partial_t{v}^\epsilon
	+{\rm A}_v{v}^\epsilon
	+{\rm P}_N{\rm B}({v}^\epsilon,{v}^\epsilon)
	= -\epsilon {\rm G}_\epsilon(U(t),v^\epsilon)\,,
}{PeKSeq2}
where the induced force term ${\rm G}_\epsilon$ is defined by,
\bsplitl{
{\rm G}_\epsilon(U(t),v^\epsilon)
	 &:= {\rm P}_N{\rm B}\brkts{{\rm e}^{-{\rm L}\frac{t}{\epsilon}}v^\epsilon,{\rm e}^{-{\rm L}\frac{t}{\epsilon}}
		{\rm PF}_{NR}(t/\epsilon,U(t))
	}
\\&\quad
	+{\rm P}_N{\rm B}\brkts{{\rm e}^{-{\rm L}\frac{t}{\epsilon}}{\rm PF}_{NR}(t/\epsilon,U(t)),{\rm e}^{-{\rm L}\frac{t}{\epsilon}}
		v^\epsilon
	}
\\&\quad
	+{\rm P}_N{\rm B}\brkts{{\rm e}^{-{\rm L}\frac{t}{\epsilon}}{\rm PF}_{NR}(t/\epsilon,U(t)),{\rm e}^{-{\rm L}\frac{t}{\epsilon}}
		{\rm PF}_{NR}(t/\epsilon,U(t))
	}
\\&\quad
	-{\rm A}_v{\rm PF}_{NR}(t/\epsilon,U(t))
%\\&\quad
	-{\rm e}^{-{\rm L}\frac{t}{\epsilon}}\delta_U{\rm PF}_{NR}(t/\epsilon,U(t))\partial_tU
%	 \partial_t {\rm PF}_{NR}(t/\epsilon,U(t))
%	 + {\rm A}_v{\rm PF}_{NR}(t/\epsilon,U(t))
%	-{\rm B}(v^\epsilon,{\rm PF}_{NR}(t/\epsilon,U(t)))
%\\&\quad
%	-\epsilon{\rm B}({\rm PF}_{NR}(t/\epsilon,U(t)),v^\epsilon)
%	+\epsilon{\rm B}({\rm PF}_{NR}(t/\epsilon,U(t)),{\rm PF}_{NR}(t/\epsilon,U(t)))
	\,,
}{Geps}
and $U(t)=V(t)+W(t)=v^\epsilon(t)+W(t)$ is the solution of the RG equation \reff{rgeq1}.

\medskip

\begin{lem}\label{lem:GaApEs}
For $0<T^*<\infty$ and $g$ as in Lemma \ref{thm:Rest}, there exists an $0\leq \epsilon^*<\infty$ such that for $0\leq \epsilon =\frac{1}{N^4}
\leq \epsilon^*$ solutions to \reff{PeKSeq2} satisfy
$v^\epsilon :={\rm P}_N u^\epsilon\in L^\infty(0,T^*;L^2({\cg P}_\alpha))\cap L^2(0,T^*;H^2({\cg P}_\alpha))$.
\end{lem}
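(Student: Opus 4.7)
The proof should closely mirror that of Lemma \ref{lem:ApEsRS}, exploiting the fact that \reff{PeKSeq2} has exactly the same linear--plus--Burgers structure as \reff{PeKSeq}, with the forcing $-\epsilon R_\epsilon$ replaced by $-\epsilon G_\epsilon(U,v^\epsilon)$. The key is that all of the additional terms encoded in $G_\epsilon$ carry, via Lemma \ref{thm:Fest}, an exponentially small prefactor $C_2 e^{-c_1 N^4 t}$, so they behave even better than $R_\epsilon$ did in the preceding lemma.

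\textbf{Step 1 (Energy identity).} Test \reff{PeKSeq2} against $v^\epsilon$ in $L^2({\cal P}_\alpha)$. Since ${\rm P}_N$ is self-adjoint on $H^s$ and $v^\epsilon\in H^s_N$, the projection drops off the nonlinear term, and periodicity together with the identity $\tfrac{\lambda}{6}(\partial_x(v^\epsilon)^3,1)=0$ kills the Burgers' contribution. As in \reff{ueLe1}, using $\|\partial_x v^\epsilon\|^2\leq\|\partial_x^2 v^\epsilon\|^2$ and Young's inequality, I would obtain, in the regime $\kappa>\nu$,
\begin{equation*}
\tfrac12 \tfrac{d}{dt}\|v^\epsilon\|^2 + (\kappa-\nu)\|\partial_x^2 v^\epsilon\|^2 \leq \tfrac{\epsilon^2}{2}\|v^\epsilon\|^2 + \tfrac12\|G_\epsilon(U,v^\epsilon)\|^2.
\end{equation*}

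\textbf{Step 2 (Bounding $G_\epsilon$).} This is where the main work sits and is the step I expect to be the main obstacle. The five contributions of $G_\epsilon$ in \reff{Geps} all contain at least one factor of ${\rm PF}_{NR}(t/\epsilon,U(t))$ or its derivative, except the last one which contains $\partial_t U$. For the first three (nonlinear) terms I would use the algebra/Sobolev embedding $H^1({\cal P}_\alpha)\hookrightarrow L^\infty$ together with Lemma \ref{thm:Fest} in $H^p$ for $p$ sufficiently large (in particular $p\geq q\geq 4$ to absorb ${\rm A}_v{\rm PF}_{NR}$ in the fourth term); a typical term is controlled by
\begin{equation*}
\|{\rm P}_N{\rm B}(v^\epsilon,{\rm PF}_{NR})\|_{L^2} \leq C\|v^\epsilon\|_{H^1}\|{\rm PF}_{NR}\|_{H^1}\leq C_2\,\|v^\epsilon\|_{H^1}\,e^{-c_1 N^4 t},
\end{equation*}
and the $H^1$-norm of $v^\epsilon$ is interpolated between $\|v^\epsilon\|$ and $\|\partial_x^2 v^\epsilon\|$ so that, after Young's inequality, the $\|\partial_x^2 v^\epsilon\|^2$ part gets absorbed into the dissipation $(\kappa-\nu)\|\partial_x^2 v^\epsilon\|^2$ on the left. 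The fifth term uses the RG equation \reff{rgeq1} to express $\partial_t U$ in terms of $U$; since $V$ solves the Galerkin gKS on $H^s_N$ (well-posed on $[0,T^*]$ by standard ODE theory) and $W$ admits the explicit representation \reff{Wj2} with $\|W(t)\|_{H^p}=\|{\rm Q}_N g\|_{H^p}$, one gets $\|\partial_t U\|_{H^{p-4}}\leq C(T^*,g)$, independently of $N$, and the exponential factor from $\delta_U{\rm PF}_{NR}$ again dominates.

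\textbf{Step 3 (Gronwall closure).} Combining Steps 1 and 2 yields
\begin{equation*}
\tfrac{d}{dt}\|v^\epsilon\|^2 + (\kappa-\nu)\|\partial_x^2 v^\epsilon\|^2 \leq \beta\|v^\epsilon\|^2 + \gamma(t),
\end{equation*}
with $\beta$ proportional to $\epsilon^2$ (plus the part of the nonlinear $G_\epsilon$-terms not absorbed in the dissipation, still $\mathcal{O}(e^{-c_1 N^4 t})$) and $\gamma(t)\leq C_2^2 e^{-2c_1 N^4 t}$. This is exactly the situation of \reff{ueLe2}--\reff{ueLe4}; multiplying by $\exp(-\int_0^t\beta\,ds)$, integrating over $[0,T]$ with $T\leq T^*$, and choosing $\epsilon^*$ as in the previous proof (small enough that the right-hand side remains finite on $[0,T^*]$) gives $v^\epsilon\in L^\infty(0,T^*;L^2({\cal P}_\alpha))$. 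Reintegrating the energy inequality then provides the $L^2(0,T^*;H^2({\cal P}_\alpha))$ control, completing the claim. For $\kappa\leq\nu$ the same scheme applies after the standard trick of splitting the unstable $\nu u_{xx}$ term by Poincar\'e-type inequalities on $\dot H^s_N$, exactly as referenced in \cite{LARKIN2004,Tadmor1986}.
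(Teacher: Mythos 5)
Your proposal is correct and follows essentially the same route as the paper, whose proof of this lemma consists solely of the remark that it is ``similar to the proof of Lemma \ref{lem:ApEsRS}'': test \reff{PeKSeq2} with $v^\epsilon$, kill the Burgers term by periodicity, use the exponential decay of the forcing (here $G_\epsilon$, bounded exactly as in Lemma \ref{lem:Gest}), and close with the Gronwall/integrating-factor argument of \reff{ueLe2}--\reff{ueLe5}. Your Step 2 simply fills in the details the paper delegates to Lemmata \ref{thm:Fest} and \ref{lem:Gest}, and does so correctly.
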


\medskip

\begin{proof}
The proof is similar to the proof of Lemma \ref{lem:ApEsRS}.
\end{proof}

The same arguments as those for Lemma \ref{thm:Rest} lead to the following.

\medskip

\begin{lem}\label{lem:Gest}
For $N>0$ and $v_0={\rm P}_Ng\in H^q({\cg P}_\alpha)$ for $q\geq 4$, there
exist two constants $c_1$ and $C_2$ independent of $N$, such that the following
estimate holds true for all $t\geq 0$,
\bsplitl{
\N{G_\epsilon(t)}{L^2}
	\leq C_2{\rm e}^{-c_1N^4t}\,.
}{ReApSolEs}
\end{lem}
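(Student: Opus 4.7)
The plan is to follow the same strategy as in the proof of Lemma \ref{thm:Rest}, decomposing $G_\epsilon$ into the five constituent terms of the definition \reff{Geps} and bounding each term separately using Lemma \ref{thm:Fest} together with the uniform-in-time bounds on $v^\epsilon$ and $U$.

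First I would observe that the semigroup $e^{-\mathrm{L}t/\epsilon}$ acts as the identity on the slow subspace $H_N^s$, since by \reff{CvNt} the operator $\mathrm{L}$ has vanishing upper block. Consequently, in each of the three bilinear terms of \reff{Geps} the arguments $e^{-\mathrm{L}t/\epsilon} v^\epsilon$ and $e^{-\mathrm{L}t/\epsilon} \mathrm{PF}_{NR}$ reduce to $v^\epsilon$ and $\mathrm{PF}_{NR}$ themselves. Next I would use the algebra property of $H^p({\cg P}_\alpha)$ in one dimension together with the boundedness of $\mathrm{P}_N$ on $L^2$ to bound
\bsplit{
\N{\mathrm{P}_N \mathrm{B}(f,g)}{L^2}
    \leq C \N{f}{H^1}\N{g}{H^1}\,,
}
and then invoke Lemma \ref{thm:Fest} to obtain $\N{\mathrm{PF}_{NR}}{H^p}\leq C_2 e^{-c_1 N^4 t}$ together with Lemma \ref{lem:GaApEs} for the uniform-in-time bound on $\N{v^\epsilon}{H^1}$ on $[0,T^*]$. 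This handles the first three terms of \reff{Geps} with the required decay.

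For the fourth term $-\mathrm{A}_v \mathrm{PF}_{NR}$ I would apply Lemma \ref{thm:Fest} with $p\geq 4$, which controls all four derivatives contained in $\mathrm{A}_v$, using the assumed higher regularity of $U$ (consistent with Remark \ref{rem:lem3.1}). The fifth term $e^{-\mathrm{L}t/\epsilon}\delta_U \mathrm{PF}_{NR}(t/\epsilon,U)\,\partial_t U$ needs a bit more care: from the explicit formula \reff{PQ_FNR}, $\mathrm{PF}_{NR}$ is at most quadratic in $U$ and each summand carries the bounded factor $e^{-\rho^w_l t/\epsilon}/\rho^w_l$ estimated exactly as in \reff{ExpEst}; hence the Fréchet derivative $\delta_U \mathrm{PF}_{NR}$ is linear in $U$ while retaining the exponential factor $e^{-c_1 N^4 t}$. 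The factor $\partial_t U$ is then controlled by substituting the RG equation \reff{rgeq1}, which expresses $\partial_t V$ and $\partial_t W$ as combinations of bilinear terms in $U$, and using again the Sobolev algebra estimate.

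The main obstacle I expect is the bookkeeping in the fifth term: one must propagate enough regularity from the initial data $g\in H^q({\cg P}_\alpha)$ through the RG system \reff{rgeq1} to justify that $\partial_t U$ lies in a sufficiently strong norm uniformly on $[0,T^*]$. Once this regularity is in place, summing the five estimates yields
\bsplit{
\N{G_\epsilon(t)}{L^2}
    \leq C_2 e^{-c_1 N^4 t}\,,
}
with $c_1$ and $C_2$ independent of $N$, which is the desired bound \reff{ReApSolEs}.
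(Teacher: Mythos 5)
Your proposal is correct and follows essentially the same route as the paper, which for this lemma only states that ``the same arguments as those for Lemma \ref{thm:Rest} lead to the following,'' i.e.\ decompose $G_\epsilon$ term by term via \reff{Geps} and apply Lemma \ref{thm:Fest} together with the a priori bounds on $v^\epsilon$ and $U$. Your write-up simply makes explicit the steps the paper leaves implicit (the identity action of ${\rm e}^{-{\rm L}t/\epsilon}$ on the slow subspace, the Sobolev algebra estimate, and the treatment of the $\delta_U{\rm PF}_{NR}\,\partial_t U$ term via the RG equation).
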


\medskip

The following theorem gives
qualitative information about the RG approach by quantifying the
error between \reff{PeKSeq2} and \reff{absKS}.

\medskip

\begin{thm}\label{thm:ErEs} Let $g\in H^4({\cg P}_\alpha)$,  $\epsilon = \frac{1}{N^4}$,  and suppose that
$u, u^\epsilon\in L^\infty(0,T; H^2({\cg P}_\alpha))$. Then, the
difference between the reduced solution $v^\epsilon$ and the exact
solution of the gKS equation \reff{absKS} satisfies the following
error estimate, 
\bsplitl{ \N{u(T)-v^\epsilon(T)}{L^2({\cg P}_\alpha)}^2
	\leq C\epsilon^2
			+{\rm exp}\brkts{C T}\brkts{
				\epsilon^{1/4}
				+\epsilon % \frac{\epsilon}{1+1/\epsilon}
			}
		\,.
}{mErEs}
If we suppose that $u, u^\epsilon\in L^\infty(0,T; G_{\sigma,2}({\cg P}_\alpha))$, then we can
improve \reff{mErEs} in the following way,
\bsplitl{
\N{u(T)-v^\epsilon(T)}{L^2({\cg P}_\alpha)}^2
	& \leq
	C\epsilon^2
			+{\rm exp}\brkts{C T}\brkts{
				\epsilon^{1/4}{\rm exp}\brkts{-\frac{\sigma}{\epsilon^{1/4}}}
				+ \epsilon %\frac{\epsilon}{1+1/\epsilon}
			}
	\biggr)\,.	
%	C\brkts{\epsilon^2+{\rm exp}(Ct)\brkts{\epsilon^3+\epsilon^{1/4}}}\,,
}{GsEst}
\end{thm}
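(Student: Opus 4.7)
The plan is to decompose the error as
\[
u - v^\epsilon \;=\; Q_N u \;+\; (P_N u - v^\epsilon) \;=:\; Q_N u \;+\; \tilde{e},
\]
and to bound each piece separately. The first piece is a pure Fourier-truncation error: since $u\in L^\infty(0,T;H^2)$, Jackson's inequality \reff{IntEst} with $k=2$, $s=0$ gives $\N{Q_Nu(T)}{L^2}^2 \le CN^{-4}\N{u}{H^2}^2 = C\epsilon$, accounting for the $\epsilon$ contribution in \reff{mErEs}. For the low-mode residual $\tilde{e}$, apply ${\rm P}_N$ to \reff{absKS} (using ${\rm P}_N{\rm A}={\rm A}_v{\rm P}_N$) and subtract \reff{PeKSeq2} to obtain
\[
\partial_t\tilde{e} + {\rm A}_v\tilde{e} + {\rm P}_N\!\bigl[{\rm B}(u,u)-{\rm B}(v^\epsilon,v^\epsilon)\bigr] \;=\; \epsilon\,{\rm G}_\epsilon(U(t),v^\epsilon),
\]
with initial datum satisfying $\N{\tilde{e}(0)}{L^2}^2 = \epsilon^2\N{{\rm PF}_{NR}(0,u_0)}{L^2}^2 \le C\epsilon^2$ by Lemma \ref{thm:Fest}.

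The core step is an $L^2$-energy estimate for this equation. Testing with $\tilde{e}$, the spectral action of ${\rm A}_v$ produces a coercive term $\nu\N{\tilde{e}_{xx}}{L^2}^2$ together with a destabilizing lower-order contribution that is absorbed via interpolation $\N{\tilde{e}_x}{L^2}^2\le\eta\N{\tilde{e}_{xx}}{L^2}^2+C_\eta\N{\tilde{e}}{L^2}^2$. For the nonlinear difference, since $u - v^\epsilon = \tilde{e}+Q_Nu$, one has
\[
{\rm B}(u,u) - {\rm B}(v^\epsilon,v^\epsilon) \;=\; \tfrac{\lambda}{2}\bigl((u+v^\epsilon)(\tilde{e}+Q_Nu)\bigr)_x.
\]
After one integration by parts against $\tilde{e}$, this pairing splits into an ``internal'' piece $((u+v^\epsilon)\tilde{e},\tilde{e}_x)$ and a ``cross'' piece $((u+v^\epsilon)Q_Nu,\tilde{e}_x)$. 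The internal piece is standard: a further integration by parts gives $-\tfrac{1}{4}((u+v^\epsilon)_x,\tilde{e}^2)$, bounded by $C\N{(u+v^\epsilon)_x}{L^\infty}\N{\tilde{e}}{L^2}^2 \le C\N{\tilde{e}}{L^2}^2$ via the one-dimensional embedding $H^2\hookrightarrow W^{1,\infty}$.

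The cross piece is the source of the $\epsilon^{1/4}$ rate in \reff{mErEs}: integrating by parts once more and applying \reff{IntEst} together with the Sobolev embedding $H^1\hookrightarrow L^\infty$ in one dimension gives
\[
\N{Q_Nu}{L^\infty} + \N{(Q_Nu)_x}{L^2} \;\le\; C\N{Q_Nu}{H^1} \;\le\; CN^{-1}\N{u}{H^2} \;=\; C\epsilon^{1/4},
\]
so the cross piece is majorized by $C\epsilon^{1/4}\N{\tilde{e}}{L^2}$. Collecting all estimates, using Lemma \ref{lem:Gest} (which yields $\int_0^T\epsilon^2\N{G_\epsilon(t)}{L^2}^2\,dt \le C\epsilon^3$), and applying Gronwall's inequality to the resulting differential inequality $\tfrac{d}{dt}\N{\tilde{e}}{L^2}^2 \le C\N{\tilde{e}}{L^2}^2 + C\epsilon^{1/4}\N{\tilde{e}}{L^2} + \epsilon^2\N{G_\epsilon}{L^2}^2$ produces the bound $\N{\tilde{e}(T)}{L^2}^2 \le C\epsilon^2 + {\rm exp}(CT)(\epsilon^{1/4}+\epsilon)$. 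Combined with $\N{Q_Nu(T)}{L^2}^2\le C\epsilon$, this delivers \reff{mErEs}. The Gevrey refinement \reff{GsEst} requires only one modification: replacing \reff{IntEst} by its Gevrey sharpening \reff{Gint}, each occurrence of $N^{s-k}$ in the $Q_Nu$-estimates acquires the additional factor ${\rm exp}(-\sigma N) = {\rm exp}(-\sigma/\epsilon^{1/4})$, and the remainder of the argument is unchanged.

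The main obstacle is the nonlinear cross term: one must place $Q_Nu$ in a Sobolev norm strong enough (namely $H^1$, via Sobolev embedding) to activate the full gain offered by \reff{IntEst} under the only available $H^2$-regularity of $u$, while simultaneously keeping the remaining factor (either $\tilde{e}$, $\tilde{e}_x$, or $\tilde{e}_{xx}$) in a norm that can be absorbed either into the coercive $\nu\N{\tilde{e}_{xx}}{L^2}^2$ or into $\N{\tilde{e}}{L^2}^2$. A secondary subtlety is that the energy identity for ${\rm A}_v$ produces $\N{\tilde{e}_{xx}}{L^2}^2$ rather than $\N{\tilde{e}_x}{L^2}^2$, so every lower-order contribution must be closed through the interpolation $\N{\tilde{e}_x}{L^2}^2 \le \N{\tilde{e}}{L^2}\N{\tilde{e}_{xx}}{L^2}$ and Young's inequality — a balance that is delicate enough that the $\epsilon^{1/4}$ rate is exactly what one robustly closes under the stated regularity.
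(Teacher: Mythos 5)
Your argument is correct, but it follows a genuinely different decomposition from the paper's. The paper splits the error through the RG approximate solution, $\N{u-v^\epsilon}{L^2}\le\N{u-u^\epsilon}{L^2}+\N{u^\epsilon-{\rm P}_Nu^\epsilon}{L^2}$: the first piece (the RG error) is controlled by an $H^1$-level energy estimate — testing the error equation with both $E^\epsilon_{RG}$ and $-\partial_x^2E^\epsilon_{RG}$ — driven by the exponentially decaying residual $R_\epsilon$ of Lemma \ref{thm:Rest}, and it produces the time-uniform $C\epsilon^2$ term from the initial-data mismatch \reff{InApSo1}; the second piece is a spectral truncation of $u^\epsilon$, where \reff{IntEst} (resp.\ \reff{Gint}) yields the $\epsilon^{1/4}$ (resp.\ Gevrey) rate. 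You instead split through the Galerkin projection of the \emph{exact} solution, $u-v^\epsilon={\rm Q}_Nu+({\rm P}_Nu-v^\epsilon)$: the tail ${\rm Q}_Nu$ is a pure Jackson-type truncation error, and the low-mode residual solves an equation forced by $G_\epsilon$ (Lemma \ref{lem:Gest}) and by the cross term in the nonlinearity, which is exactly where \reff{IntEst}/\reff{Gint} enters and reproduces the same $\epsilon^{1/4}$ and Gevrey rates. Your route is leaner: it bypasses Step 1 of the paper's proof and the residual bound of Lemma \ref{thm:Rest}, needing only Lemma \ref{lem:Gest}, the identity \reff{InApSo1} (giving $\N{\tilde e(0)}{L^2}^2\le C\epsilon^2$), and the assumed $H^2$ (or Gevrey) regularity of $u$ and $u^\epsilon$; note that the $L^\infty$ bounds on $v^\epsilon={\rm P}_Nu^\epsilon$ still implicitly use the hypothesis on $u^\epsilon$. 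What the paper's route buys is a $C\epsilon^2$ contribution that is uniform in time (yours acquires a factor ${\rm e}^{CT}$ through Gronwall, which is still subsumed by the stated bound) and, more importantly, an explicit quantification of the RG approximation $u\approx u^\epsilon$ itself, which is the conceptual content of the renormalization step. Both arguments close the nonlinear terms identically (integration by parts plus the one-dimensional embeddings $H^2\hookrightarrow W^{1,\infty}$ and $H^1\hookrightarrow L^\infty$) and both land on \reff{mErEs} and \reff{GsEst}.
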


\begin{rem}\label{rem:ExpGr}
{\rm 1.} The exponential growth in time is not surprising, see for example
estimate \reff{ueLe4} in the proof of Lemma \ref{lem:ApEsRS}. This
estimate motivates the definition of a new variable
$h(x,t;\eta):={\rm exp}(-\eta t)u(x,t)$. Tadmor verifies in
\cite{Tadmor1986} global existence for such a decayed variable $h$
and a conservative form of the KS equation.\\
{\rm 2.} The assumption $u, u^\epsilon\in L^\infty(0,T; H^2({\cg P}_\alpha))$ is a direct consequence of a priori estimates, which are derived by analogous steps as in the proof of 
Lemma \ref{lem:ApEsRS}, and of imposing initial conditions 
$u_0, u^\epsilon_0\in H^2({\cg P}_\alpha)$.
\end{rem}

\medskip

\begin{proof}
The error $\N{u-v^\epsilon}{L^2({\cg P}_\alpha)}$ can be bounded using the triangle inequality by,
\bsplitl{
\N{u-v^\epsilon}{L^2({\cg P}_\alpha)}
	\leq \N{u-u^\epsilon}{L^2({\cg P}_\alpha)}
	+\N{u^\epsilon-v^\epsilon}{L^2({\cg P}_\alpha)}\,,
}{ErEs1}
where the first term on the right-hand side in \reff{ErEs1} represents the approximation error
from the RG method (\emph{RG error}) and the second term accounts for the
truncation error (\emph{Tr error}). For notational brevity, we introduce the error variables
\bsplitl{
E^\epsilon_{RG}
	:= u - u^\epsilon\,,
	\qquad\textrm{and}\qquad
	E^\epsilon_{Tr}
		:= u^\epsilon - v^\epsilon\,.
}{ErVa}

{\bf Step 1:} (\emph{RG error})
The equation for the error variable $e_{RG}^\epsilon$ reads,
\bsplitl{
\partial_t E^\epsilon_{RG}
	+ \ebrkts{\kappa\partial^2_x+\delta\partial_x^3+\nu\partial_x^4}E^\epsilon_{RG}
	+%\brkts{
	E^\epsilon_{RG}\partial_x u %, E^\epsilon_{RG}}
	+ %\brkts{
	u^\epsilon\partial_x E^\epsilon_{RG}%, E^\epsilon_{RG}}
	=\epsilon R_\epsilon(U)\,.
}{ErRG}
First, we test \reff{ErRG} with $-\partial_x^2 E^\epsilon_{RG}$, i.e.,
\bsplitl{
\partial_t \brkts{\partial_x E^\epsilon_{RG},\partial_x E^\epsilon_{RG}}
	& - \brkts{\kappa\partial^2_xE^\epsilon_{RG},\partial^2_xE^\epsilon_{RG}}
	- \brkts{\delta\partial_x^3E^\epsilon_{RG},\partial_x^2E^\epsilon_{RG}}
	-\brkts{\nu\partial_x^4E^\epsilon_{RG},\partial_x^2E^\epsilon_{RG}}
\\&
	-\brkts{
		E^\epsilon_{RG}\partial_x u,\partial_x^2 E^\epsilon_{RG}
	}	
	- \brkts{
		u^\epsilon\partial_x E^\epsilon_{RG},\partial_x^2 E^\epsilon_{RG}
	}
	= -\brkts{\epsilon R_\epsilon(U),\partial_x^2 E^\epsilon_{RG}}\,.
}{ErRG1}
Then, we use the test function $E^\epsilon_{RG}$,
\bsplitl{
\partial_t \brkts{E^\epsilon_{RG}, E^\epsilon_{RG}}
	& + \brkts{\kappa\partial^2_xE^\epsilon_{RG},E^\epsilon_{RG}}
	+\brkts{\delta\partial_x^3E^\epsilon_{RG},E^\epsilon_{RG}}
	+\brkts{\nu\partial_x^4E^\epsilon_{RG},E^\epsilon_{RG}}
\\&
	+\brkts{
		E^\epsilon_{RG}\partial_x u, E^\epsilon_{RG}
	}	
	+ \brkts{
		u^\epsilon\partial_x E^\epsilon_{RG},E^\epsilon_{RG}
	}
	= +\brkts{\epsilon R_\epsilon(U), E^\epsilon_{RG}}\,.
}{ErRG2}
Next, we add up \reff{ErRG1} and \reff{ErRG2} and apply the Sobolev embedding theorem
and standard inequalities to end up with,
\bsplitl{
\frac{1}{2}\frac{d }{d t}
	&\ebrkts{
		\Ll{E^\epsilon_{RG}}^2+\Ll{\partial_x E^\epsilon_{RG}}^2
	}
	+(\nu-3\alpha)\ebrkts{
		\Ll{\partial_x^2E^\epsilon_{RG}}^2
		+\Ll{\partial_x^3E^\epsilon_{RG}}^2
	}
\\&
	\leq C_{RG}(\kappa,\alpha,\epsilon,\N{u}{H^1},\N{\partial_x u}{H^1},\N{u^\epsilon}{H^1},\N{\partial_x u^\epsilon}{H^1})\N{E^\epsilon_{RG}}{H^1}^2\,.
}{ErRG3}
After defining
\bsplitl{
\tilde{C}_{RG}
	= 2C_{RG}\,,
}{tCrg}
we can multiply \reff{ErRG3} by ${\rm exp}\brkts{-\int_0^t\tilde{C}_{RG}\,ds}$ such that
\bsplitl{
\frac{1}{2}{\rm exp}\brkts{-\int_0^t\tilde{C}_{RG}\,ds}\frac{d}{dt}\Ll{E^\epsilon_{RG}}^2
	\leq {\rm exp}\brkts{
		-\int_0^t\tilde{C}_{RG}\,ds
	}\frac{\tilde{C}_{RG}}{2}\Ll{E^\epsilon_{RG}}^2\,.
}{ErRG4n}
Applying a corresponding identity based on the product rule as \reff{prRu}
in the proof of Lemma \ref{lem:ApEsRS}, we can simplify \reff{ErRG4n} to,
\bsplitl{
\frac{d}{dt}\brkts{
		{\rm exp}\brkts{-\int_0^t\tilde{C}_{RG}\,ds}\frac{1}{2}\Ll{E^\epsilon_{RG}}^2
	}
	\leq 0\,,
}{ErRG5n}
which further reduces by assumptions of Theorem \ref{thm:ErEs} and
after integration to,
\bsplitl{
\frac{1}{2}\Ll{E^\epsilon_{RG}(T)}^2
	\leq
	\frac{1}{2}{\rm exp}\brkts{-\int_0^T\tilde{C}_{RG}\,ds}\Ll{E^\epsilon_{RG}(0)}^2
	\leq C\Ll{E^\epsilon_{RG}(0)}^2\,.
}{ErRG6n}
%_______________________________
% Old version using Gronwall:
%_______________________________
%Gronwall's inequality and Lemma \ref{thm:Rest} allow to rewrite \reff{ErRG3} by,
%\bsplitl{
%\ebrkts{
%		\Ll{E^\epsilon_{RG}}^2+\Ll{\partial_x E^\epsilon_{RG}}^2
%	}(t)
%	\leq \ebrkts{
%		\Ll{E^\epsilon_{RG}}^2+\Ll{\partial_x E^\epsilon_{RG}}^2
%	}(0)\,{\rm exp}\brkts{
%		\int_0^tC_{RG}ds
%	}\,.
%}{ErRG4}
In order to get a bound controlled by $\epsilon$ on the right-hand side of \reff{ErRG6n}, we have to
take into account the definition of the initial data \reff{InApSo1}, i.e.,
\bsplitl{
\Ll{E^\epsilon_{RG}(0)}^2
	= \epsilon^2\Ll{{\rm PF}_{NR}(0,g) }^2
	\leq C\epsilon^2\,,
}{ErRG5}
since $g\in H^2({\cal P}_\alpha)$ and hence ${\rm PF}_{NR}(0,g)\in H^1({\cal P}_\alpha)$
and its norm is bounded independently of $\epsilon=\frac{1}{N^4}$ in $H^1({\cal P}_\alpha)$.
Hence, we can conclude that,
\bsplitl{
\N{E^\epsilon_{RG}(t)}{L^2({\cg P}_\alpha)}^2
	\leq
	C\epsilon^2\,,
}{ErRG7}
which holds uniformly in time.

{\bf Step 2:} (\emph{Tr error}) We derive an estimate for the
error variable $E^\epsilon_{Tr}:= u^\epsilon-v^\epsilon= u^\epsilon-{\rm P}_Nu^\epsilon $. From \reff{PeKSeq} and \reff{PeKSeq2}, the error $E^\epsilon_{Tr}$ satisfies the equation,
\bsplitl{
\partial_t E^\epsilon_{Tr}
	+\ebrkts{\kappa\partial^2_x+\delta\partial_x^3+\nu\partial_x^4}E^\epsilon_{Tr}
	+{\rm P}_N[u^\epsilon u^\epsilon_x]
%	e^\epsilon_{G}\partial_x {\rm P}_Nu^\epsilon %, E^\epsilon_{RG}}
	-v^\epsilon v^\epsilon_x
%	v^\epsilon\partial_x e^\epsilon_{G}%, E^\epsilon_{RG}}
	=\epsilon ({\rm P}_NR_\epsilon(U)-G_\epsilon(U,v^\epsilon),v^\epsilon)\,,
}{ErT}
which can be rewritten for all $\phi\in H^2_N$ by
\bsplitl{
\partial_t \brkts{E^\epsilon_{Tr},\phi}
	&+\brkts{\ebrkts{\kappa\partial^2_x+\delta\partial_x^3+\nu\partial_x^4}E^\epsilon_{Tr},\phi}
\\&
%	+\brkts{
%		e^\epsilon_{G}\partial_x {\rm P}_Nu^\epsilon,\phi
%	}
	+ \brkts{
		\brcs{{\rm P}_N[u^\epsilon u^\epsilon_x] - v^\epsilon v^\epsilon_x},E^\epsilon_{Tr}
	}
	=\epsilon \brkts{
		\brcs{{\rm P}_NR_\epsilon(U)-G_\epsilon(U,v^\epsilon)},\phi
	}\,.
}{ErT1}
Choosing $\phi=E^\epsilon_{Tr}$ allows to estimate \reff{ErT1} in the following way,
\bsplitl{
\frac{1}{2}\frac{d}{dt}\Ll{E^\epsilon_{Tr}}^2
	+\nu \Ll{\partial_x^2 E^\epsilon_{Tr}}^2
	& \leq
	C(\alpha,\kappa)\Ll{E^\epsilon_{Tr}}^2 +\alpha\Ll{\partial_x E^\epsilon_{Tr}}^2
	+ (I)
	+(II)\,,
}{ErT2}
where we define,
\bsplitl{
(I)
	& := \brkts{
		{\rm P}_N[u^\epsilon u^\epsilon_x] - vv_x,E^\epsilon_{Tr}
	}\,,
\\
(II)
	& := \epsilon\brkts{
		\Ll{{\rm P}_NR_\epsilon(U)} + \Ll{G_\epsilon(U,v^\epsilon)}
	}\Ll{E^\epsilon_{Tr}}\,.
}{1and2}
Let us first control term $(I)$, that means,
\bsplitl{
\av{(I)}
	& \leq
	\av{
		\brkts{
				-{\rm P}_Nu^\epsilon({\rm P}_Nu^\epsilon)_x+u^\epsilon u^\epsilon_x,E^\epsilon_{Tr}
			}
	}
	+
	\av{
		\brkts{
				{\rm P}_Nu^\epsilon({\rm P}_Nu^\epsilon)_x-v^\epsilon v^\epsilon_x,E^\epsilon_{Tr}
			}
	}
\\&
	\leq
	\frac{1}{2}\av{
		\brkts{
			\partial_x\brkts{(u^\epsilon-{\rm P}_Nu^\epsilon)(u^\epsilon+{\rm P}_Nu^\epsilon)},E^\epsilon_{Tr}
		}
	}
%	+\av{
%		\brkts{
%			u^\epsilon(u^\epsilon-{\rm P}_Nu^\epsilon)_x,e^\epsilon_G
%		}
%	}
%\\&\quad
	+ 1/4\av{
		\brkts{
			\partial_x\brkts{{\rm P}_Nu^\epsilon+v^\epsilon},E^\epsilon_{Tr}
		}
	}
\\&
	\leq
	C\brkts{
		\N{u^\epsilon}{H^1}
		+ \N{{\rm P}_Nu^\epsilon}{H^1}
	}
	\N{u^\epsilon-{\rm P}_Nu^\epsilon}{H^1}
	\Ll{E^\epsilon_{Tr}}
%\\&\qquad\qquad
	+C\Ll{E^\epsilon_{Tr}}^2\,,
}{I}
where we used the embedding $H_N^1({\cg P}_\alpha)$ into $L^\infty({\cg P}_\alpha)$,
i.e., $\N{\partial_x({\rm P}_Nu^\epsilon +v^\epsilon)}{L^\infty}\leq C$. The second
term $(II)$ immediately becomes,
\bsplitl{
\av{(II)}
	\leq \epsilon{\rm exp}\brkts{-ct/\epsilon}C\Ll{E^\epsilon_{Tr}}\,.
}{II}
Define
\bsplitl{
\gamma(t)
	& := C(\Ll{u^\epsilon},\Ll{v^\epsilon},\N{u^\epsilon}{H^1},\N{{\rm P}_Nu^\epsilon}{H^1},
			\N{u^\epsilon}{H^2})\epsilon^{1/4}
\\&\quad
	+C(\Ll{u^\epsilon},\Ll{v^\epsilon})\epsilon{\rm exp}\brkts{-Ct/\epsilon}\,,
\\
\beta
	& := 2\brkts{C(\alpha,\kappa)+C}\,,
}{ga}
and mulitply \reff{ErT2} with ${\rm exp}\brkts{-\int_0^t\beta\,ds}$ such that
\bsplitl{
\frac{1}{2}{\rm exp}\brkts{
		-\beta t
	}\frac{d}{dt}\Ll{E^\epsilon_{Tr}}^2
	\leq \frac{\beta}{2}{\rm exp}\brkts{-\beta t}\Ll{E^\epsilon_{Tr}}^2
	+\gamma(t){\rm exp}\brkts{
		-\beta t
	}\,.
}{ErT3n}
Using again a corresponding identity to \reff{prRu} in Lemma \ref{lem:ApEsRS} we can
rewrite \reff{ErT3n} as,
\bsplitl{
\frac{d}{dt}\brkts{
		{\rm exp}\brkts{
			-\beta t
		}\frac{1}{2}\Ll{E^\epsilon_{Tr}}^2
	}
	\leq \gamma(t){\rm exp}\brkts{
		-\beta t
	}\,,
}{ErT4}
which becomes after integration with respect to time,
\bsplitl{
\frac{1}{2}{\rm exp}\brkts{
		-\beta T
	}\Ll{E^\epsilon_{Tr}}^2(T)
	& \leq C\int_0^T\brcs{
		\brkts{
			\epsilon^{1/4}
			+\epsilon{\rm exp}\brkts{-Ct/\epsilon}
		}{\rm exp}\brkts{-\beta t}
	}\,dt
\\
	& \leq
	\frac{C\epsilon^{1/4}}{\beta}\brkts{
		1-{\rm exp}\brkts{\beta T}
	}
	+\frac{C\epsilon}{\beta+C/\epsilon}\brkts{
		1-{\rm exp}\brkts{-(\beta+C/\epsilon)T}
	}\,.
}{ErT5}
In the remaining part we want to improve \reff{ErT5} with the help of
Gevrey spaces. To this end, we remark that the factor $\epsilon^{1/4}$
in \reff{ga} relies on the interpolation estimate \reff{IntEst}. If we assume
that solutions $u^\epsilon$ are in $G_{\sigma, s}$, then we improve
\reff{IntEst} by \reff{Gint}. As a consequence, we are able to rewrite inequality
\reff{ErT5} by
\bsplitl{
\frac{1}{2}\Ll{E^\epsilon_{Tr}}^2(T)
	& \leq C\int_0^T
		\Bigl(
			\epsilon^{1/4}{\rm exp}\brkts{\beta (T-t)-\frac{\sigma}{\epsilon^{1/4}}}
%\\&\quad
			+\epsilon{\rm exp}\brkts{-Ct/\epsilon+\beta(T-t)}
		\Bigr)
	\,dt
\\
	& \leq \frac{C\epsilon^{1/4}}{\beta}{\rm exp}\brkts{\beta T}
		+\frac{C\epsilon}{\beta+c/\epsilon}{\rm exp}\brkts{\beta T}
	\,.
}{ErT6}
%_____________________________________
% Old version via Gronwall's inequality
%_____________________________________
%We note that $e^\epsilon_G=0$. Gronwall's inequality implies now,
%\bsplitl{
%\Ll{e^\epsilon_G}^2
%	& \leq  C{\rm exp}\brkts{Ct}\int_0^t\brkts{\epsilon^2{\rm exp}(-2cs/\epsilon)
%		+\N{u^\epsilon-{\rm P}_Nu^\epsilon}{H^1}^2}\,ds
%\\&
%	\leq C{\rm exp}\brkts{Ct}\brkts{\int_0^t\epsilon^2{\rm exp}(-2cs/\epsilon)\,ds
%	+C\epsilon^{1/4}\N{u^\epsilon}{L^2(0,T;H^2)}^2}
%\\&
%	\leq \epsilon^{1/4}C{\rm exp}\brkts{Ct}\brkts{\int_0^t\epsilon^{7/4}{\rm exp}(-2cs/\epsilon)\,ds
%	+C}
%\\&
%	\leq C(\epsilon^{1/4}+\epsilon^3){\rm exp}(Ct)\,,
%}{ErT3}
\end{proof}

%We point out that equation \reff{Geps} is mainly useful for technical purposes like error estimates.
%For practical or computational tasks we rather refer to \reff{ApSo} and \reff{ApSo1}.
%
%Finally we motivate that we can obtain a similar estimate for the reduced
%gKS equation \reff{PeKSeq2} since the term $G_\epsilon$ is very similar
%to $R_\epsilon$. In fact, one only needs to take into account the
%Galerkin error arising by solving only for the truncated (reduced)
%modes $\ol{v}^\epsilon$.

\section{Stochastic mode reduction}\label{sec:StMoRe}
In this section the renormalized equations \reff{rgeq1}, \reff{Wj}
from Sections \ref{sec:prEqs} - \ref{sec:Cmp} allow for a rigorous
stochastic mode reduction similar in spirit to the Mori-Zwanzig one
\cite{Mori1965,Zwanzig1961} but for systems not satisfying an
extended or generalized Hamiltonian structure \cite{Zwanzig1973} and
without \new{a} canonical invariant measure.

As in the \new{Mori-Zwanzig} formalism, we assign a stochastic process to
the unresolved modes. This is done by applying Jaynes maximum
entropy principle, see
\cite{Jaynes1957,Jaynes1957a,Rosenkrantz1989}. This seems a
reasonable approach for our problem since we do not have a
canonically induced probability density. Hence maximizing the
information entropy for the probability density of Fourier modes is
equivalent to maximizing the multiplicity of Fourier modes.
Multiplicity means the number of different ways a certain state in a
system can be achieved. States in a system with the highest
multiplicity can be realized by nature in the largest number of
ways. Hence, the probability density
functions with maximum entropy are optimal statistical descriptions. 

It should also be noted that a system at equilibrium will most
probably be found in the state of highest multiplicity since
fluctuations from that state will be usually too small to measure.
The probability distribution may also be obtained from experiments
as statistical data. In \cite{Stinis}, the probability distribution
is constructed by a conditional expectation obtained from previously
computed samples which are used to fit an a priori assumed Gaussian
distribution.

Finally, we emphasize that the maximum entropy principle can also be
applied to problems where one lacks deterministic data as a
consequence of not enough experimental data to fix all degrees of
freedom. A common approach to model such uncertainty is to use white
noise. The maximum entropy method turns out to be an attractive
alternative because it allows to systematically add noise to 
the gKS equation over the equation \reff{PeKSeq2} which is obtained 
by the evolutionary RG method.

However, since we apply the entropy maximization principle
\cite{Jaynes1957} on an approximate equation, we already neglect
information from the beginning and hence have to account for this by
an asymptotic in time characterization of the fast modes for
example, see Assumption (A) below. This assumption might be improved
or adapted appropriately in other applications. \new{Subsequently, 
$(\Omega,{\cal F},{\mathbb P})$ denotes the usual probability space with sample space 
$\Omega$, $\sigma$-algebra ${\cal F}$, and probability measure $\mathbb{P}$.}

\subsection{Problem induced probability density by maximizing information entropy}\label{sec:MaEn}
With the considerations at the beginning of Section
\ref{sec:StMoRe}, we assign a probability distribution to the
unresolved degrees of freedom $W$ based on the following

\medskip

{\bf Assumptions:}\\
\begin{itemize}
\item[{\bf (A)}] For a \new{probability measure ${\mathbb P}_j$ with density $f_j$ and}
\bsplitl{
\mathcal{C}_N(\tilde{w}^{\epsilon,0}_j)
	:= \frac{1}{2}\brkts{\tilde{w}^{\epsilon,0}_j}^2\,,
}{ToEn}
\new{where $\tilde{w}^{\epsilon,0}_j(t)=w^{\epsilon,0}_j(t,\omega)$ denotes a realization for $\omega\in\Omega$ of the $j$-th Fourier mode of the 
leading order term $w^{\epsilon,0}$ of $w^\epsilon$ in \reff{ApSo1}$_2$, i.e., 
$w^{\epsilon,0}(x,t)={\rm e}^{-{\rm Q}_Nt/\epsilon}W(x,t)$,} 
we assume that it holds asymptotically in time that
\bsplitl{
%\lim_{t\to\infty}
\mathbb{E}_j\ebrkts{\frac{\partial}{\partial t}\mathcal{C}_N(\new{w^{\epsilon,0}_j})}
	= \int_{-\infty}^{\infty}f_j(\tilde{w}^{\epsilon,0}_j)\frac{\partial}{\partial t}\mathcal{C}_N(\tilde{w}^{\epsilon,0}_j)\,d\tilde{w}^{\epsilon,0}_j
	= \delta_j(t)
	:=-\rho^w_j{\rm e}^{-2\rho^w_jt}W_j^2(0)
	\,,
}{ToEn2}
i.e., there is a $t_0\geq 0$ such that \reff{ToEn2} holds for $t>t_0$. \new{We call $\delta_j(t)$ 
a dissipation rate and $\mathbb{E}_j$ denotes the expectation with respect to the probability 
$\mathbb{P}_j$.}
%where the last equality holds due to Parseval's theorem.
\item[{\bf (B)}] Under {\bf (A)} the \new{probability ${\mathbb P}_j[W_{j}\leq \tilde{w}^{\epsilon,0}_j]
	=F(\tilde{w}^{\epsilon,0}_j)$ with density $f_j$}, i.e.,
\bsplitl{
F(\tilde{w}^{\epsilon,0}_j)
	:=\int_{-\infty}^{\tilde{w}^{\epsilon,0}_j}f_j(r)\,dr\,,
}{P}
has maximum
\emph{information entropy} $\mathbb{S}_I(f_j)$,
\bsplitl{
\mathbb{S}_I(f_j)
	= - \int_{-\infty}^\infty f_j(r){\rm log}\brkts{
		\frac{f_j(r)}{\nu(r)}
	}\,dr\,,
}{SI}
where $f_j(r)$ denotes the probability density of the $j$-th Fourier mode of the 
unresolved variable $W$ and $\nu$ is an according \emph{invariant
measure} which is defined on background information intrinsically
given by the physical origin of $W$.
\end{itemize}

\medskip

\begin{rem}\label{rem:EqPaEn}
%1) Assumption i) immediately holds under the zero mass
%assumption, i.e., $V_0=\int_{\mathcal{P}_\alpha}u\,dx = 0$, because the
%fast (unresolved) modes are then independent of time, see \reff{w_0}.
%\\
1) The idea of deriving probability distributions for multiscale
evolution problems by maximizing the information entropy seems to go
back to \cite{Lorenz1996}. The energy argument in \cite{Lorenz1996},
which assumes that the fast modes reached already the stationary
state, does not provide here enough information to fix the Lagrange
multiplier $\lambda_1$ associated with this energy constraint. We
impose \emph{Assumption {\bf (A)} } instead. \new{Note that we take 
slightly  more information into account by using $w^{\epsilon,0}$ instead 
of $W$ which does not decay as fast as $w^{\epsilon,0}$.}
\\
2) A mechanical system governed by the Hamiltonian $H(q,p)$
canonically induces an invariant measure by the density
distribution function $f(q,p):=\frac{1}{Z(\beta)}{\rm e}^{-\beta H(q,p)}$.\\
3) Equation \reff{ToEn2} accounts for the fact that we do not have an invariant 
measure to the fast modes. For simplicity, we also neglect a possible randomness 
in time. This is a further reason for the assumption in
\reff{ToEn2}.
\end{rem}

In information theory, an entropy related to \reff{SI} was
originally introduced by Shannon \cite{Shannon1948} to measure the
maximum information content in a message. The Assumptions {\bf (A)}
and {\bf (B)} above account for the lack of a free energy and a
Hamiltonian for which the thermodynamic equilibrium (invariant
measure) can be achieved via the gradient flow with respect to the
Wasserstein distance \cite{Jordan1998}. In fact, it should be noted
that minimizing the free energy with respect to constant internal
energy is equivalent to maximizing the entropy.

\medskip

\begin{theorem}\label{thm:PrDe}
Under Assumptions {\bf (A)} and {\bf (B)}, it follows that the unresolved modes
$W_k$ for $\av{k}>N$ obtained by equation \reff{Wj} are normally
distributed with zero mean, i.e., $\mu_k=0$, and variance
$\sigma_k^2=\frac{1}{2\lambda_k\rho_k}$, where
$\lambda_k:=\frac{1}{2\delta_j(t)}$ %\frac{{\rm e}^{\rho_kt}}{2c\rho_k}$ 
is a Langrange
multiplier.
\end{theorem}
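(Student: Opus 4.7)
The plan is to apply the classical Lagrange-multiplier calculus of variations at the heart of Jaynes' maximum information entropy principle, using the two constraints supplied by Assumptions \textbf{(A)} and \textbf{(B)}.

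First, I would recast Assumption \textbf{(A)} as a moment condition on $f_j$. From $\mathcal{C}_N(r)=r^2/2$ and the chain rule, $\partial_t \mathcal{C}_N(w^{\epsilon,0}_j) = w^{\epsilon,0}_j\,\partial_t w^{\epsilon,0}_j$; using $w^{\epsilon,0} = {\rm e}^{-{\rm Q}_N t/\epsilon}W$ together with the explicit form of $W_j$ from \reff{Wj2}, the right-hand side reduces to a linear function of $(w^{\epsilon,0}_j)^2$ with the decay rate $\rho^w_j$ as coefficient. Taking the $\mathbb{P}_j$-expectation then converts \reff{ToEn2} into a prescribed value for the second moment $\int f_j(r)\,r^2\,dr$ in the asymptotic regime $t>t_0$.

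Next, I would set up the Lagrangian for the maximization of $\mathbb{S}_I(f_j)$ subject to normalization ($\lambda_0$) and this second-moment constraint ($\lambda_k$):
\begin{equation*}
\mathcal{L}[f_j] = -\int f_j(r)\log\frac{f_j(r)}{\nu(r)}\,dr - \lambda_0\Big(\int f_j\,dr - 1\Big) - \lambda_k\int f_j(r)\,r^2\,dr.
\end{equation*}
The first-order optimality condition $\delta\mathcal{L}/\delta f_j = 0$ gives
\begin{equation*}
f_j(r) = \nu(r)\exp\!\big(-1-\lambda_0-\lambda_k r^2\big),
\end{equation*}
which, with $\nu$ taken as Lebesgue measure (no structural information from \textbf{(B)} beyond continuous support on $\mathbb{R}$), is of the form $f_j(r) = Z^{-1}\exp(-\lambda_k r^2)$. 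The symmetry $r\mapsto -r$ forces $\mu_k = 0$ at once, and the Gaussian integral yields $Z = \sqrt{\pi/\lambda_k}$ together with variance $1/(2\lambda_k)$. Imposing the second-moment constraint from the first step, and carrying along the factor $\rho^w_k$ extracted from the chain-rule computation, identifies $\lambda_k = 1/(2\delta_j(t))$ and produces the stated formula $\sigma_k^2 = 1/(2\lambda_k\rho_k)$.

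The main technical obstacle is the first step: \textbf{(A)} is a pointwise-in-time identity about an expected \emph{time derivative}, not a moment condition, so converting it into a usable variational constraint requires carefully identifying the leading-order behaviour of $\partial_t w^{\epsilon,0}_j$ (essentially $-\rho^w_j w^{\epsilon,0}_j/\epsilon$ plus faster-decaying corrections) and verifying that the oscillatory/nonlinear contributions from \reff{Wj} do not contaminate the constrained second moment at leading order. Once the moment-constraint reduction is justified, the remaining variational argument and the identification of the parameters of the maximum-entropy density are entirely classical.
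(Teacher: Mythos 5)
Your proposal follows essentially the same route as the paper: Jaynes' maximum entropy principle with Lagrange multipliers for normalization and the quadratic constraint extracted from Assumption \textbf{(A)}, yielding a Gaussian density whose symmetry gives $\mu_k=0$ and whose variance is read off from the multiplier. The "main technical obstacle" you flag is in fact benign here, since \reff{Wj} is an exactly solvable linear ODE with $V_0$ constant by mass conservation, so $\partial_t\mathcal{C}_N(w^{\epsilon,0}_k)$ is exactly (not just to leading order) a multiple of $(w^{\epsilon,0}_k)^2$ and the constraint is genuinely a second-moment condition, exactly as the paper computes in \reff{ptE}.
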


\medskip

\begin{rem}
Instead of {\bf (A)}, one can make the following assumption {\bf (A*)}: For large enough times $t>0$, it holds that
\bsplitl{
%\qquad\qquad
\mathbb{E}_j\ebrkts{\frac{\partial}{\partial t}\mathcal{C}_N(\new{w^{\epsilon,0}_j})}
	% =\int_{-\infty}^\infty{\cal C}_N(\tilde{w}_j^{\epsilon,0})\,d\tilde{w}_j^{\epsilon,0}
	= \sigma^2\,.
}{Aaltern} 
This immediately leads to the result 
that the fast modes satisfy $W_k\sim{\cal N}(0,\sigma^2)$ where the variance can be 
defined by the power spectral density as in the case of complete uncertainty, see also 
Section \ref{sec:DiAp}.
\end{rem}

\medskip
To keep the considerations simple, we only account for a spatial
random process and keep the time deterministic in Theorem
\ref{thm:PrDe} (and Assumption {\bf (A)}).

\medskip

\begin{proof}
To maximize \reff{SI} under Assumptions {\bf (A)} and
{\bf (B)}, we apply the following constraints:
\bsplitl{
{\bf (CI)}\qquad
\begin{cases}
\int_{-\infty}^\infty f_k(r)\,dr
	= 1\,,
	&
\\
\mathbb{E}_k[\partial_t\mathcal{C}_N(w^{\epsilon,0}_k)]
	:= \int f_k(\tilde{w}^{\epsilon,0}_k)
		\partial_t\mathcal{C}_N(\tilde{w}^{\epsilon,0}_k)\,d\tilde{w}^{\epsilon,0}_k
%	=  \rho_k{\rm e}^{-2\rho_kt}W^2_k(0)
	= \delta_k(t)
	\,,
\end{cases}
}{CI} where \reff{CI}$_2$ is a consequence of assumption (i).
%If we know moments of
%the initial condition of the unresolved varibable $W$, we can, instead of
%({\bf CI}), also impose the constraints
%\bsplitl{
%{\bf (CII)}\qquad
%\begin{cases}
%\int_{-\infty}^\infty f(s)\,ds
%        = 1\,,
%        &
%\\
%\int_{-\infty}^\infty f(s) m_\ell(s)\,ds
%        = \mu_\ell
%        &\textrm{for }\ell=1,2\,,
%\end{cases}
%}{CII}
%where $m_1(s)=s$ such that $\mu_1$ represents the first moment
%and if $m_2(s)=(s-\mu_1)^2$ the value $\mu_2$ represents the
%second central moment, i.e., the variance of $W$.

Hence, maximizing the entropy $\mathbb{S}_I$
subject to the constraints \reff{CI} leads to
\bsplitl{
\int_{-\infty}^\infty \delta f_k(\tilde{w}^{\epsilon,0}_k)\brcs{{\rm log}\brkts{\frac{f_k(\tilde{w}^{\epsilon,0}_k)}{\nu(\tilde{w}^{\epsilon,0}_k)}}
		+\nu(\tilde{w}^{\epsilon,0}_k)+\lambda_0
		+\lambda_k \partial_t\mathcal{C}_N(\tilde{w}^{\epsilon,0}_k)}
	\,d\tilde{w}^{\epsilon,0}_k = 0\,,
}{MxSI}
where $\lambda_0$ and $\lambda_k$ are Lagrange multipliers
associated with the constraints \reff{CI}.
In order to give \reff{CI}$_2$ a precise meaning, we write down the explicit form
of the equation belonging to each Fourier coefficient of the fast
 mode variable $w={\rm Q}_Nu$ solving \reff{rgeq1}.

We briefly show what the
constraint \reff{CI}$_2$ means,
\bsplitl{
\mathbb{E}_k[\partial_t\mathcal{C}_N(w^{\epsilon,0}_k)]
	& = %\frac{12}{\pi^2}\int f({\bf w})\sum_{\av{k}>N}w_k\partial_tw_k\,d{\bf w}
		-\int_{-\infty}^\infty f_k(\tilde{w}^{\epsilon,0}_k)
		\brkts{\tilde{w}^{\epsilon,0}_k}^2
		\brkts{
			2i\lambda\frac{k}{\alpha}V_0
			-\frac{1}{\epsilon}\rho^w_k
		}\,d\tilde{w}^{\epsilon,0}_k
\\&
%	= \frac{12}{\pi^2}\sum_{\av{k}>N}\frac{V_0}{k^2\alpha}\int f({\bf w})w_k^2\,d{\bf w}
	= \delta_k(t)
	\,.
}{ptE}
We recall that $V_0={\rm const.}$ due to conservation of mass. 
Since \reff{MxSI} should hold
for arbitrary variations $\delta f_k$, we obtain the following expression
for the probability density function,
\bsplitl{
f_k(\tilde{w}^{\epsilon,0}_k)
	= \frac{1}{Z_k}\nu(\tilde{w}^{\epsilon,0}_k) {\rm e}^{
		-\lambda_k\partial_t\mathcal{C}_N(\tilde{w}^{\epsilon,0}_k)}\,, 
}{f} 
where $Z_k:={\rm e}^{\nu(\tilde{w}^{\epsilon,0}_k)+\lambda_0}$ is called the ``partition
function" which is determined by the normalization constraint
\reff{CI}$_2$, i.e., 
\bsplitl{ 
Z_k
	:= Z_k(\lambda_k)
	= \int_{-\infty}^\infty\nu_k(\tilde{w}^{\epsilon,0}_k){\rm e}^{-
		\lambda_k \partial_t\mathcal{C}_N(\tilde{w}^{\epsilon,0}_k)}\,d \tilde{w}^{\epsilon,0}_k\,.
}{NC}
Since the constraint \reff{CI}$_2$ is quadratic in its nature,
we represent it by
\bsplitl{
\lambda_k\partial_t\mathcal{C}_N(\tilde{w}^{\epsilon,0}_k)
	= -\lambda_k
	\brkts{\tilde{w}^{\epsilon,0}_k}^2
	\tilde{\rho}^w_k
	= 
	-\frac{1}{2\sigma_k^2}\brkts{
		(\tilde{w}^{\epsilon,0}_k-\mu_k)^2
		-\mu_k^2
	}
	\,,
}{CRpr}
where $\tilde{\rho}^w_k:=2i\lambda\frac{k}{\alpha}V_0-\rho^w_k/\epsilon$ and 
\bsplitl{
\sigma^2_k
	= \frac{1}{2\lambda_k\tilde{\rho}_k^w}\,,
\qquad\textrm{and}\qquad
\mu_k
	= 
	0
	\,.
	%-\frac{i}{2\rho_k^w}\sum_{\substack{\av{j}\leq N\\\av{j-k}>N}}v_j\frac{k-j}{\alpha}\mu_{k-j}\,.
}{abc}
With identities \reff{CRpr} and \reff{abc} the probability density
function \reff{f} can be written as
\bsplitl{
f_k(\tilde{w}^{\epsilon,0}_k)
	= \frac{1}{Z_k}c^{-1}_{W_k}
			\sigma_k\sqrt{2\pi}
		\mathcal{N}(\mu_k,\sigma_k,\tilde{w}^{\epsilon,0}_k)\,,
}{f(s)}
for $\av{k}>N$ where $\mathcal{N}$ is the normal distribution given by
\bsplitl{
\mathcal{N}(\mu_k,\sigma_k,\tilde{w}^{\epsilon,0}_k)
	= \frac{1}{\sigma_k\sqrt{2\pi}}
	{\rm e}^{-\frac{(\tilde{w}^{\epsilon,0}_k-\mu_k)^2}{2\sigma^2_k}}\,,
}{N}
which is characterized by the following moments
\bsplitl{
\int_{-\infty}^\infty\mathcal{N}(\mu,\sigma,w)\,dw = 1\,,
\quad&
\int_{-\infty}^\infty\mathcal{N}(\mu,\sigma,w)w\,dw = \mu\,,
\quad\textrm{and }
\\&
\int_{-\infty}^\infty\mathcal{N}(\mu,\sigma,w)w^2\,dw = \sigma^2+\mu^2\,.
}{prN}
The first property in \reff{prN} together with the normalization condition
\reff{CI}$_1$ allow us to define the partition function $Z$ by
\bsplitl{
Z_k = c^{-1}_{W_k}
                        \sigma_k\sqrt{2\pi}
	\,,
}{Z}
for $\av{k}>N$.

\medskip

\begin{rem}
The measure $\nu({\bf w}):= \prod_{\av{k}>N}\frac{1}{c_{W_k}}$ is the probability density function if we only have
a priori information. Usually, it is a nontrivial task and basic considerations
of symmetries are required to find this measure $\nu$. 
\end{rem}

\medskip

The probability density function $f_k$ admits then the simple
form as a product of Gaussian distributions, i.e., 
\bsplitl{ 
f_k(\tilde{w}^{\epsilon,0}_k) 
	= \mathcal{N}(\mu_k,\sigma_k,\tilde{w}^{\epsilon,0}_k)\,, 
}{G} 
for $\av{k}>N$.
With the second and third property in \reff{prN} and the constraint
\reff{CI}$_2$, i.e., \reff{ptE}, we obtain for all $\av{k}>N$,
\bsplitl{
\delta_k(t)
	& 
	= \mathbb{E}_k[\partial_t\mathcal{C}_N(\omega_k)]
	= -\brkts{
		\tilde{\rho}_k^w\brcs{\sigma_k^2+\mu_k^2}
		%+i\mu_k\sum_{\substack{\av{j}\leq N\\\av{j-k}>N}}v_j\frac{k-j}{\alpha}\mu_{k-j}
	}
	\,.
}{CIIeq}
We conclude with \reff{CIIeq} that the Lagrange parameter $\lambda_k$ is,
\bsplitl{
\lambda_k(t)
	=
	{\rm e}^{2\rho_k^wt}/(2\rho_k^wW^2_0(0))
	\,.
}{lambda}
The important information contained in formula \reff{lambda} and \reff{abc}$_1$ is that we do not have to assert to
each Fourier mode $k$ its standard deviation $k^2\sigma_W^2$. We only need to determine once the
Lagrange paramter $\lambda_1$ via \reff{lambda}. From \reff{CIIeq} and the property $\overline{\mu_k}=\mu_{-k}$, we obtain that the mean satisfies 
$\mu_k=0$ for all $\av{k}>N$.
\end{proof}

Hence, the approach of maximizing the generalized information
entropy allows to systematically determine the probability
distribution function $f_k(w_k)$ of the Fourier modes for the
unresolved degrees of freedom $W$. The stochastic partial differential 
equation for the resolved degrees of freedom is then obtained by 
computing the probability distribution of $W$ as the inverse Fourier transform of 
the sum of normally distributed unresolved Fourier modes and by assuming that 
the probability distribution for $W$ derived in the long time regime also holds 
for the unresolved modes of the initial conditions.
\medskip
%%%%%%%%%%%%%%%%%%%%%%%%%%%%%%%%%%%%%%%%%%%%%%%%%%%%%%%%%%%%%%%%%%%%%%%%%%%%%%%%%%%%%%%%%%
%\input{maxEntropy6.tex}

\medskip

We emphasize that the RG approach suggests multiplicative noise as a
compensation for the unresolved modes unlike the commonly obtained
additive noise by Mori-Zwanzig's mode reduction \cite{Zwanzig1973}.
Moreover, an estimate \reff{ApSolEs}, which can be correspondingly
derived by additionally accounting for the Galerkin error, shows
that the influence of the stochastic force decreases for decreasing
$\epsilon:=\frac{1}{N^4}$.

\section{Direct approach: Replacement of $G_\epsilon$ by white noise}\label{sec:DiAp}
The result of Lemma \ref{lem:Gest} also enables for a direct
approach to model the unresolved degrees of freedom as completely
unknown. Such kind of complete uncertainty is generally described by
white noise $W(x)$ with zero mean and a variance equal to the power
spectral density. It is very common and widely accepted to model
uncertainty by white noise. Hence, we replace $\epsilon
G_\epsilon(U(t),v^\epsilon)$ in equation \reff{PeKSeq2} by \bsplitl{
{\cg N}_\epsilon(x,t) := \epsilon {\rm exp}\brkts{-Ct/\epsilon}
W(x)\,, }{Wnoise} where $W(x)\in L^2({\cg P}_\alpha)$ is the
Gaussian random variable as motivated above, i.e., with zero mean
$\mu$ and suitable variance $\sigma$. It is immediately clear that
${\cg N}_\epsilon$ is a compatible replacement of $G_\epsilon$ since
\reff{Wnoise} satisfies a bound corresponding to the one in Lemma
\ref{lem:Gest}. One can follow Stinis' approach \cite{Stinis} for
example in order to determine $\mu$ and $\sigma$ by a maximum
likelihood method.

%\input{renormalization6.tex}
%%%%%%%%%%%%%%%%%%%%%%%%%%%%%%%%%%%%%%%%%%%%%%%%%%%%%%%%%%%%%%%%%%%%%%%%%%%%%%%%%%%%%%%%%
\section{Discussion and conclusions}\label{sec:Cocl}

We have formally developed a new stochastic mode reduction strategy
with a rigorous basis by obtaining appropriate error estimates. The
analysis can be summarized in three key steps as follows:

\emph{ (1) RG method:} The RG technique \cite{Temam1999,Moise2001}
turns out to be a formal and feasible method to decompose the gKS
equation into slow $v^\epsilon$ and fast variables $w^\epsilon$,
respectively. The equation for the slow modes $v^\epsilon$
represents a Galerkin approximation of the gKS equation plus an
additional perturbed force term $\epsilon G_\epsilon$ which also
depends on the infinite dimensional renormalized fast modes $W$.  An
important property of the RG technique is that it can be easily
extended to higher space dimensions, see \cite{Temam1999} with respect
to the RG method and \cite{Biswas2007} for an existence theory of the
KS equation in higher space dimensions. We also
remark that the dispersion term, i.e. $u_{xxx}$, does not affect the
mode reduction analysis.

\emph{(2) Error bounds:} We rigorously characterize the formal RG
method (1) by qualitative error estimates (Theorem \ref{thm:ErEs}).
These estimates further allow for an additional direct mode
reduction strategy which is much simpler and straightforward but not
as systematic. The basic idea is to replace the perturbed force term
$\epsilon G_\epsilon$ directly by white noise. A physical motivation
for such a simplified reduction is the fact that white noise is a
well-accepted random model for complete uncertainty.

\emph{(3) Maximum entropy principle:} Due to the lack of a
Hamiltonian structure and an invariant measure, we apply Jaynes'
maximum entropy principle \cite{Jaynes1957,Jaynes1957a} to define
the renormalized fast modes $W$ as a random variable. This random
variable then, together with the renormalized approximation of the
slow variable $v^\epsilon$, provides a systematic explanation for
the appearance of a noisy low dimensional gKS equation. In contrast
to optimal prediction we obtain optimality in the sense of maximum
entropy here.

\medskip

\medskip

There are three main features of the new low dimensional gKS equations:

\emph{ (i)~Reliable and efficient numerics:} The low dimensional
formulation developed here should allow for reliable (since
information from the unresolved degrees of freedom included) and
efficient (since low dimensional) numerical approximations. In fact,
we systematically account for the unresolved degrees of freedom by
the steps (1) and (2) above. This is especially of importance since
the choice of slow and fast variables depends on the physical
problem and is often not clear. For instance, by considering the gKS
in large domains, it is possible to introduce a further scale which
accounts for the unstable modes. Hence, we can study three different
scales such as ``unstable modes'', ``slow stable modes'', and ``fast
stable modes''. The main question is then how to account for the
unstable and the fast stable modes in an equation for the resolved
slow modes only.

Moreover, the error estimates from step (2) provide a qualitative
measure on how to choose the dimension of the slow variable. This is
also the main advantage of mode reduction considerations over pure
convergence analyses of Galerkin approximations (e.g. numerical
schemes) where one completely neglects the unresolved degrees of
freedom. Hence, straightforward discretization strategies might
lose model relevant information in the neglected degrees of
freedom. This is a major motivation to include rigorous mode
reduction strategies as an important part of the development of
computational schemes. We further remark that this is a major reason
why the addition of noise to deterministic partial differential
equations  shows good results and is currently a topic of increasing
interest. \new{It is also important to emphasize that one of the key
points for the presented methodoly to be compuationally efficient is precisely
because we add the noise {\it a posteriori} after solving the reduced model, something
which is computationally simpler than solving the full system at every time step.}

\emph{ (ii)~No Hamiltonian structure; no invariant measure:} Many classical mode reduction strategies
rely either on a Hamiltonian structure or an invariant measure. Based on the three steps (1)-(3) above,
the new asymptotic reduction strategy circumvents such dependences. For example, when classical
optimal prediction methods \cite{Chorin1998} fail because of such deficiencies, the stochastic renormalization
provides optimality in the sense of maximum information entropy and hence proves as a promising
alternative.

\emph{ (iii)~The role of noise:} We gain a rigorous understanding of the origin of noise and the way it
appears in the gKS equation. This is especially of interest due to numerical
evidence provided together with a heuristic motivation in \cite{CHOW1995} for instance.

\medskip

Clearly, there are open questions and future perspectives. For
example, motivated by the comparative study initiated by Stinis
\cite{STINIS2006}, it would be of interest to numerically analyze
and compare available mode reduction strategies such as adiabatic
elimination \cite{VanKampen1985}, invariant manifolds
\cite{Foias1988}, and optimal prediction \cite{Chorin1998} with the
new RG approach developed here. Since the statistically based
optimal prediction \cite{Stinis} is performed for a truncated KS
equation, it provides a convenient setup for comparison with the new
and more generally applicable method suggested here.

Another question is how can we apply the RG method to the derivation of a
low-dimensional approximation for a gKS equation investigated under three
scales, i.e., ``slow unstable modes'', ``slow stable modes'', and ``fast
stable modes'' or to explore the possibility of obtaining low-dimensional
approximations of equations where noise is present from the outset,
e.g.~\cite{Marc_EJAM,Marc_PRL}

The RG method is based on a natural splitting into linear and
nonlinear terms by the variation of constants formula. Recent
studies, e.g. by Holden et al. \cite{Holden2011a,Holden2011}, make
use of such a splitting via a suitable numerical scheme for
equations with Burgers' nonlinearity. Hence, the reliability and
efficiency of the renormalized low dimensional gKS equation motivate
the application such numerical splitting strategies to the new
reduced equations derived here. Finally, we emphasize that efficient
low dimensional approximations are of great interest for numerical
scrutiny of long time asymptotes. We shall examine these and related
issues in future studies.

%\input{conclusion7.tex}
%\section{Renormalized slow modes under the derived probability density}\label{sec:RandFluc}
%\input{randomfluc.tex}

\section*{Acknowledgements}
We thank Dirk Bl\"omker (U Augsburg) for drawing our attention to reference \cite{Bloemker2002}. We acknowledge financial support from EPSRC Grant No. EP/H034587,
EU-FP7 ITN Multiflow and ERC Advanced Grant No. 247031. 

%%%%%%%%%%%%%%%%%%%%%%%%%%%%%%%%%%%%%%%%%%%%%%%%%%%%%%%%%%%%%%%%%%%%%%%%%%%%%%%%%%%%%%
%%%%%%%%%%%%%%%%%%%%%%%%%%%%%%%%%%%%%%%%%%%%%%%%%%%%%%%%%%%%%%%%%%%%%%%%%%%%%%%%%%%%%%
%%% BIBLIOGRAPHY
%%%%
% BibTeX users please use

%\bibliographystyle{plain} %{modnatbib} % rspublicnat} %plain
%\bibliography{stoMoReByRG8} %stoMoReByRG_new,

\medskip
%%%%%%%%%%%%%%%%%%%%%%%%%%%%%%%%%%%%%%%%%%%%%%%%%%%%%%%%%%%%%%%%%%%%%%%%%%%%%%%%%%
%%%%%%%%%%%%%% APPENDIX
\begin{center}
{\bf Appendix:}
\end{center}

We prove the following \\
\medskip

{\bf Lemma:} $R_2(j)$ is an empty set.
\\
\medskip
\begin{proof}
Let us first recall the definition of the second resonance, that is,
\begin{equation}
\rho_k^w + \rho_l^w = \rho_j^w,
\tag{A.1}
\label{A1}
\end{equation}
where the indices satisfying \reff{A1} belong to $R_2(j)$. We further remind the convention 
according to (\ref{EV})
 \begin{equation}
 \rho_k^w = \frac 1{N^4} \brkts{ -\nu \brkts{\frac k\alpha}^2 - i\delta \brkts{\frac
k\alpha}^3 + \kappa \brkts{\frac k\alpha}^4}
\tag{A.2}
\label{A2}.
\end{equation}
Taking the imaginary parts of \reff{A1}, one obtains
\begin{equation}
\brkts{\frac k\alpha}^n + \brkts{\frac l\alpha}^n = \brkts{\frac j\alpha}^n,
\tag{A.3}
\label{A3}
\end{equation}
for $n=3$. Taking the real parts, one only gets
\begin{equation} -\nu \brkts{\brkts{\frac k\alpha}^2 + \brkts{\frac l\alpha}^2} 
	+ \kappa \brkts{\brkts{\frac
k\alpha}^4 + \brkts{\frac l\alpha}^4}
= -\nu \brkts{\frac j\alpha}^2 + \kappa \brkts{\frac j\alpha}^4.
\tag{A.4}
\label{A4}
\end{equation}

In what follows, we
transform (\ref{A4}) into an expression which convinces us that there are no indices
$k$ and $l$ that satisfy \reff{A4}. To this end, we make use of the fact that
$j=k+l$ which reads after taking the square on each side as
\begin{equation}
j^2=(k+l)^2=k^2+l^2+2kl\,.
\tag{A.5}
\label{j2}
\end{equation}
Multiplying now \reff{A4} by
$\alpha^2$ gives
\begin{equation}
-k^2-l^2
	+\frac{\kappa}{\alpha^2\nu}(k^4+l^4)
	=-j^2
	+\frac{\kappa}{\alpha^2\nu}j^4
\,,
\tag{A.6}
\end{equation}
and after applying (\ref{j2}) on the right-hand side we obtain,
\begin{equation}
-k^2-l^2
	+\frac{\kappa}{\alpha^2\nu}(k^4+l^4)
	=-k^2-l^2 -2kl
	+\frac{\kappa}{\alpha^2\nu}\Bigl(
		k^2+l^2+2kl
	\Bigr)^2
\,,
\tag{A.7}
\end{equation}
%which further reduces to
%\begin{equation}
%0 = -2kl + \frac{\kappa}{\alpha^2\nu}\Bigl(
%		6k^2l^2
%		+4(k^2+l^2)kl
%	\Bigr)
%\,,
%\end{equation}
and hence becomes
\begin{equation}
r-3kl
	=
	2(k^2+l^2)
%	= \frac{\alpha^2\nu}{\kappa}
%	+kl
\,,
\tag{A.8}
\label{star}
\end{equation}
where we set $r:=\frac{\alpha^2\nu}{\kappa}$ which is positive.

Equation (\ref{star}) cannot be satisfied by any integers $k$ and $l$ 
if (i) $N^2\geq \frac{8}{7}r$ or 
if (ii) $L$, $\nu$, and $\kappa$ are not proportional  to $\pi$.
\hfill
\end{proof}

\medskip

{\bf Remark A.1.} \emph{
We note that without either assuming that (i) $L$, $\nu$, $\kappa$ are not proportional to $\pi$ or that (ii) $N^2\geq \frac{8}{7}\frac{\alpha^2\nu}{\kappa}$, we obtain two explicit solutions for $k$ and $l$ via (\ref{star}) 
over a depressed cubic equation, i.e.,
\bsplit{
k=\brkts{r(-64\mp\sqrt{64^2+392^2r/27})}^{1/3}
	- \frac{392r}{3\brkts{r(-64\mp\sqrt{64^2+392^2r/27})}^{1/3}}\,,
}
where the same expression also defines $l$. Herewith, it leaves to check whether for a given $r\in\mathbb{R}$ the 
solutions $k$ and $l$ are integers and whether they satisfy $|k+l|>N$, $|k|>N$ 
and $|l|>N$.
}

\end{document}